\documentclass[%
 reprint,
%superscriptaddress,
%groupedaddress,
%unsortedaddress,
%runinaddress,
%frontmatterverbose, 
%preprint,
%preprintnumbers,
%nofootinbib,
%nobibnotes,
%bibnotes,
 amsmath,amssymb,
 aps,
%pra,
%prb,
%rmp,
%prstab,
%prstper,
%floatfix,
superscriptaddress]{revtex4-1}
\linespread{0.926}

% ----------------------------------------
% Core math and symbols
% ----------------------------------------
\usepackage{amsmath}
\usepackage{amssymb}
\usepackage{amsfonts}
\usepackage{amsthm}
\usepackage{mathtools}
\usepackage{bbold}
\usepackage{braket}
\usepackage{bm} % bold math

% ----------------------------------------
% Tables and figures
% ----------------------------------------
\usepackage{adjustbox}        % for adjustbox and export in figures
\usepackage{cellspace}        % for spacing in tables
\setlength\cellspacetoplimit{6pt}
\setlength\cellspacebottomlimit{6pt}
\usepackage{dcolumn}          % decimal alignment in tables
\usepackage{diagbox}          % diagonal cell lines
\usepackage{graphicx}         % include graphics
\usepackage{float}            % improved float handling
\usepackage{multirow}         % multirow cells in tables
\usepackage{subcaption}       % subfigures with \subcaption
\usepackage{wrapfig}          % text wrapping around figures

% ----------------------------------------
% Algorithms
% ----------------------------------------
\usepackage{tcolorbox} % Include this in your preamble
\tcbset{colback=gray!5, colframe=black!80, sharp corners, boxrule=0.5pt}

% ----------------------------------------
% Text formatting and layout
% ----------------------------------------
\usepackage{afterpage}
\usepackage{babel}
\usepackage{ragged2e}
\usepackage{ulem}             % for \sout{} strikeout
\usepackage{xcolor}

% ----------------------------------------
% Theorem-like environments
% ----------------------------------------
\theoremstyle{definition}
\newtheorem{definition}{Definition}[section]

\newtheorem*{Pre*}{Proof}

\newtheorem{theorem}{Theorem}[section]

\newtheorem{lemma}[theorem]{Lemma}
\newtheorem{prop}[theorem]{Proposition}

% ----------------------------------------
% Comments and utilities
% ----------------------------------------
\usepackage{comment}
\usepackage{verbatim}

% ----------------------------------------
% Hyperlinks
% ----------------------------------------
\usepackage{hyperref}

% ----------------------------------------
% Quantum circuits
% ----------------------------------------
\usepackage{tikz}
\usetikzlibrary{quantikz2}

% ----------------------------------------
% Float aliases (equation floats)
% ----------------------------------------
\usepackage{aliascnt}
\newaliascnt{eqfloat}{equation}
\usepackage{newfloat}
\newfloat{eqfloat}{h}{eqflts}
\floatname{eqfloat}{Equation}

\usepackage[font=small,labelfont=bf,
   justification=justified,
   format=plain]{caption}

\begin{document}

%\preprint{APS/123-QED}

\title{Real‑Space Chemistry on Quantum Computers: \\A Fault‑Tolerant Algorithm with Adaptive Grids and Transcorrelated Extension}

\author{C\'esar Feniou} \email{cesarf@qubit-pharmaceuticals.com}
\affiliation{Sorbonne Université, LCT, UMR 7616 CNRS, 75005 Paris, France}
\affiliation{Qubit Pharmaceuticals, Advanced Research Department, 75014 Paris, France}

\author{Christopher Cherfan}
\affiliation{Qubit Pharmaceuticals, Advanced Research Department, 75014 Paris, France}
\affiliation{\'Ecole Polytechnique Fédérale de Lausanne (EPFL), 1015 Lausanne, Switzerland}

\author{Julien Zylberman}
\affiliation{Sorbonne Université, Observatoire de Paris, Université PSL, CNRS, LUX, 75005 Paris, France}

\author{Baptiste Claudon}
\affiliation{Sorbonne Universit\'e, LJLL, UMR 7198 CNRS, 75005 Paris, France}
\affiliation{Sorbonne Université, LCT, UMR 7616 CNRS, 75005 Paris, France}
\affiliation{Qubit Pharmaceuticals, Advanced Research Department, 75014 Paris, France}

\author{Jean-Philip Piquemal}\email{jean-philip.piquemal@sorbonne-universite.fr}
\affiliation{Sorbonne Université, LCT, UMR 7616 CNRS, 75005 Paris, France}
\affiliation{Qubit Pharmaceuticals, Advanced Research Department, 75014 Paris, France}

\author{Emmanuel Giner}\email{emmanuel.giner@lct.jussieu.fr}
\affiliation{Sorbonne Université, LCT, UMR 7616 CNRS, 75005 Paris, France}

%\date{\today}

\begin{abstract}
First-quantized, real-space formulations of quantum chemistry on quantum computers are appealing: qubit count scales logarithmically with spatial resolution, and Coulomb operators achieve quadratic instead of quartic computational scaling of two‑electron interactions. However, existing schemes employ uniform discretizations, so the resolution required to capture electron–nuclear cusps in high-density regions oversamples low-density regions, wasting computational resources. We address this by deploying non-uniform, molecule-adaptive grids that concentrate points where electronic density is high. Using Voronoi partitions of these grids, the molecular Hamiltonian is expressed in a Hermitian form and in a transcorrelated, isospectral form that eliminates Coulomb singularities and yields cusp-free eigenfunctions. Both formulations slot naturally into quantum eigenvalue solvers: Hermitian Quantum Phase Estimation (QPE) and the recent generalised Quantum Eigenvalue Estimation (QEVE) protocol for its non-Hermitian, transcorrelated counterpart. Numerical validation on benchmark systems confirms that this non-heuristic ab initio framework offers a promising path for accurate ground-state chemistry on quantum hardware.

\begin{comment}
We present a first‐quantized, real‐space quantum computing framework for electronic structure that integrates nonuniform, molecular adaptive grids. While prior approaches have achieved favorable asymptotic scalings using Qubitized-QPE by discretizing space uniformly in plane‐waves or Cartesian grids, such schemes inefficiently resolve the cusps in the wavefunction arising from Coulomb singularities. By adopting grids that concentrate points in high electronic density regions, we provide a solution to improve spatial fidelity without excessive oversampling of low‐density volumes. We then extend the study to transcorrelated Hamiltonians, where the divergent Coulomb terms are removed while preserving the Hamiltonian spectrum, yielding smooth eigenfunctions and further improving spatial convergence. Numerical benchmarks on two-electron systems confirm the robustness of this framework, set for integration into Quantum Phase Estimation for Hermitian cases and its recent generalization to non-Hermitian operators using the Quantum Chebychev Phase Estimation.

Reprendre l'abstract et parler du Quantum Chebyshev Phase
Estimation (QCPE). Il faut discuter de quoi mettre en avant etc...
FTQC, clairement diagonalisation pour QPE QEVE.
\end{comment}
\end{abstract}

\maketitle

%\tableofcontents

\section{Introduction}

\subsection{History and Motivation}
In 1982, Feynman suggested that quantum computers could efficiently simulate quantum systems~\cite{feynman2018simulating}, launching efforts to build quantum hardware and develop quantum algorithms with the potential to outperform classical methods for specific problems. In that regard, quantum chemistry stands out as a promising application where a true quantum computing advantage could be found since available techniques such as the Quantum Phase Estimation (QPE) intends to provide polynomial cost solutions to the time-independent, non-relativistic Schrödinger equation. Of course, this comes at two conditions: i) to be able to provide an appropriate initial state; ii) to access a fault-tolerant quantum computer (FTQC) to handle the complex quantum circuits~\cite{kitaev1995quantum,whitfield2011simulation}. QPE for quantum chemistry was initially developed in the second-quantized framework, where the electronic Hamiltonian is represented on a set of \(N\) spinorbitals (\textit{e.g.} Gaussian atom-centered functions). The spectrum can be accessed either by time-evolving the system, originally via Trotterization~\cite{aspuru2005simulated, wecker2014gate, babbush2015chemical, low2023complexity}, later via Taylorization~\cite{berry2015simulating, babbush2018low}, or by applying a qubitized walk operator~\cite{low2019hamiltonian}. In this setting, the wavefunction requires \(\mathcal{O}(N)\) qubits (one per spinorbital), and the gate complexity becomes particularly favorable when the particle number $\eta$ is in the order of the orbital count~\cite{babbush2023quantum}. 
Nevertheless, reaching the typical chemical accuracy, \textit{e.g.} an error of 0.04 eV compared to the exact total energy, requires reaching the complete basis-set (CBS) limit, which in turn necessitates using far more basis functions than particles ($\eta \ll N$)~\cite{traore2024shortcut}. 
In that context, the following first‐quantized formalism appears as a preferable scheme due to its low computational scaling in terms of both the number of particles and discretization points.
In this \textit{first‑quantised} picture, each of the \(\eta\) electrons has its own register that stores the binary index of a grid point among the \(N\) available sites.  
Indexing \(N\) sites requires only \(\lceil\log_2 N\rceil\) qubits, so the complete \(\eta\)-electron configuration space fits in \(\eta\lceil\log_2 N\rceil = \mathcal{O}(\eta\log N)\) qubits.  
Fermionic antisymmetry is enforced by projecting onto the subspace that is antisymmetric under exchange of these registers, as detailed in Ref.\,~\cite{su2021fault}. However, since standard overlapping integrals introduce \(\mathcal{O}(N^4)\) two-body terms, the gate complexity is driven to similarly high powers of \(N\), negating much of the practical advantage over second quantization. To address this, alternative discretization schemes have been developed with the goal of preserving a diagonal structure for the two-body operator in real space, thus reducing the Hamiltonian term count to \( \mathcal{O}(N^2) \). Examples include uniform real-space grids~\cite{kassal2008polynomial, kivlichan2017bounding, chan2023grid}, gausslets and mostly planewaves~\cite{childs2022quantum, babbush2019quantum, su2021fault, berry2024quantum, georges2025quantum}, and led to overall QPE gate complexities as small as \( \mathcal{O}(\eta^{8/3} N^{1/3}\epsilon^{-1} )\) using on‐the‐fly Hamiltonian term computation~\cite{babbush2019quantum, su2021fault}. Although asymptotically appealing, these discretisation schemes impose \textit{uniform spatial resolution}, which is particularly ill-suited to capture sharp features, known as cusps, that appear in the wavefunction near Coulomb singularities (e.g. where particles coincide)~\cite{troullier1991efficient}, since the fine resolution required near nuclei would simultaneously oversample low-density regions, leading to a large computational overhead. Moreover, accurate resolution of cusps inherently amplifies the spectral norms, since it explores near-singularity regions and thus correspondingly raises the total QPE complexity. 

As a result, and despite new trade-offs and promising asymptotics, quantum computing still faces the age-old challenge of resolving electronic wavefunction cusps without incurring prohibitive computational cost~\cite{kivlichan2017bounding,mcclean2020discontinuous, traore2024shortcut}.

In this work, we present a new framework for improved spatial accuracy in a real‐space, first‐quantized setting. The first strategy uses adaptive multicenter grids, rooted in Density Functional Theory (DFT) integration techniques~\cite{becke1988multicenter,mura1996improved}, to concentrate points in regions of high electron density, and implements the Laplacian via a Voronoi‐based finite‐volume scheme~\cite{sukumar2003voronoi,son2011voronoi,son2009theoretical}. We derive a Hermitian Hamiltonian in this basis and outline its integration into a qubitized-QPE protocol. The second strategy uses a similarity transformation to produce the \textit{transcorrelated} (TC) Hamiltonian~\cite{Boys_Handy_1969_determination}, which is isospectral to the original but replaces Coulomb singularities with finite effective interactions. The TC Hamiltonian eigenstates are free of electron–nucleus and electron–electron cusps, eliminating the need for excessively tight grids. On the drawback side, the TC Hamiltonian exhibits a three-body operator, and non-Hermitian terms, turning the Schrödinger eigenvalue problem into a non-self-adjoint (generalized) eigenvalue problem. Thus, the usual QPE algorithm becomes inapplicable. Though, recent advances in eigenvalue processing have introduced \textit{Quantum EigenValue Estimation} (QEVE) algorithms~\cite{Low_2024}, that enables efficient eigenvalue estimation for non-Hermitian matrices. Building upon this, we derive a TC Hamiltonian in the adaptive grid basis, and outline its integration into a QEVE protocol.

For each approach, we provide a detailed methodology and support its robustness with numerical simulations, while noting that the scheme offers numerous opportunities for improvement.

\subsection{Problem Statement}
This study addresses the central challenge of quantum chemistry: describing molecular electronic structure and properties through the non-relativistic, time-independent Schrödinger equation. This equation describes the $\eta$ electrons, located at positions $ \{\mathbf{x}_i\}_{i=0}^{\eta-1} \subset \mathbb{R}^3 $, which interact among themselves through a repulsive Coulomb potential and with the $ M $ nuclei with respective charges $ \{ \mathcal{Z}_\alpha \}_{\alpha=0}^{M-1} $ through an attractive Coulomb potential.  
As usual in quantum chemistry, we employ the Born-Oppenheimer approximation~\cite{BO} which leads to clamped nuclei, 
\textit{i.e.} the positions of the nuclei $ \{ \mathbf{R}_\alpha \}_{\alpha=0}^{M-1}$ are considered fixed.
Therefore, within the Born-Oppenheimer approximation, the total molecular Schrödinger equation is written as
\begin{equation}
\hat{H} \Psi(\mathbf{x}_0, \ldots, \mathbf{x}_{\eta-1}) = E \Psi(\mathbf{x}_0, \ldots, \mathbf{x}_{\eta-1}) \label{ps1},
\end{equation}
where the Hamiltonian $ \hat{H} $ is given, in atomic units, by
\begin{equation}
\hat{H} = \sum_{i=0}^{\eta-1} \Big{(} -\frac{1}{2}\nabla^{2}_{\mathbf{x}_i} 
  - \sum_{\alpha=0}^{M-1} 
  \frac{\mathcal{Z}_\alpha}{|\mathbf{x}_i - \mathbf{R}_\alpha|} 
  + \sum_{\substack{j>i}}^{\eta-1} 
  \frac{1}{|\mathbf{x}_i - \mathbf{x}_j|} \Big{)} \label{ps2}.
\end{equation}
In Eq. \eqref{ps1}, the wave function $ \Psi(\mathbf{x}_0, \ldots, \mathbf{x}_{\eta-1}) $ corresponds to the $i$-th eigenstate of the Hamiltonian, and as any fermionic wave function, is antisymmetric with respect to exchange of any couple of particles, \textit{i.e.}
\begin{equation}
\Psi(\ldots, \mathbf{x}_i, \ldots, \mathbf{x}_j, \ldots) = 
-\Psi(\ldots, \mathbf{x}_j, \ldots, \mathbf{x}_i, \ldots) \,\,\, \forall \,i,j\,,
\end{equation}
with $ |\Psi_i|^2 $ representing the probability density for the electron's positions in the $i$-th eigenstate. The real number $E_i$ in Eq. \eqref{ps1} is the total energy of the system associated to the $i$-th eigenstate, which is the central quantity needed to compute many physico-chemical properties in molecular systems. 
Among all eigenvalues of the Hamiltonian, the lowest one $E_0$ plays a special role, and can be recast as the solution of the following minimization problem
\begin{equation}
  E_{\text{GS}}^{*}=\min_{\Psi\in\mathcal{H}_{\mathrm{as}}^{1}(\mathbb{R}^{3\eta})}
  \frac{\displaystyle\int\Psi^{*}\hat{H}\Psi\;\prod_{i=0}^{\eta-1}d\mathbf{x}_i}
       {\displaystyle\int|\Psi|^{2}\;\prod_{i=0}^{\eta-1}d\mathbf{x}_i},
\end{equation}
where $ \mathcal{H}^1_\mathrm{as}(\mathbb{R}^{3\eta}) $ is the space of antisymmetric, square-integrable functions of $3\eta$ variables with finite kinetic energy. On classical computers, determining the energy of the ground state within chemical accuracy (\textit{i.e.} within an error of 0.0016 a.u. or equivalently 0.04 eV) exhibits exponential complexity with the size of the problem (space dimension and number of particles)~\cite{Helgaker2000}.

\subsection{First-Quantized Molecular Simulation via Qubitized QPE}
This section reviews the qubitized‐QPE procedure for ground-state energy estimation of first‐quantized molecular Hamiltonians, 
and gives a qualitative and intuitive explanations of the leading order scaling complexities.
Consider the above electronic system of $\eta$ electrons, each described in a spatial discretization basis of $N$ elements. In the first-quantization formalism, each electron is encoded in a quantum register of $\log N$ qubits, and the $\eta-$electron wavefunction is obtained by tensor multiplying single-electron states. The number of qubits to encode this wave function scales therefore as $\mathcal{O}(\eta \log N)$. Since the fermionic antisymmetry is not embedded in the Hamiltonian, it must therefore be enforced directly on the wavefunction, a process that can be performed with negligible additional gate complexity $\mathcal{O}(\eta \log \eta \log N)$~\cite{Berry_2018}. 
The first quantized Hamiltonian, once discretized on a chosen spatial basis, is usually expressed as a linear combination of unitaries (LCU). From this decomposition one constructs the corresponding qubitized quantum-walk operator, unitary and spectrum-related to the original Hamiltonian, whose eigenvalues can then be extracted using the QPE algorithm~\cite{low2019hamiltonian}. This now usual protocol is outlined more precisely in Appendix \ref{LCUB}. Phase estimation then uses $\widetilde{\mathcal O}\!\bigl(C_H\,\lambda\,\epsilon^{-1}\bigr)$
logical gates, where $\epsilon$ is the target energy error, $C_H$ is the cost of block-encoding the walk operator, $\lambda$ is the one‑norm of the LCU of the Hamiltonian matrix as defined in \ref{LCUnorm}, and $\widetilde{\mathcal O}$ hides poly‑logarithmic factors~\cite{low2019hamiltonian}. $C_H$ has been shown to scale as $\tilde{\mathcal{O}}(\eta)$~\cite{babbush2018low}. The LCU norm is fundamentally governed by the \textit{minimal spatial separation} between electrons (as $O(\eta/(\Delta r)^2)$ for the kinetic part and $O(\eta^2/\Delta r)$ for the Coulomb part). For a uniform lattice or plane‑wave grid the spacing scales as $\Delta r\sim(\Omega/N)^{1/3}$ with volume $\Omega\propto\eta$.  Consequently, the one-norm of the LCU of the kinetic and potential operator are $\lambda_T=\mathcal O\!\bigl(\eta^{1/3}N^{2/3}\bigr)$ and $\lambda_V=\mathcal O\!\bigl(\eta^{5/3}N^{1/3}\bigr)$ respectively, so $\lambda=\mathcal O\!\bigl(\eta^{4/3}N^{2/3}+\eta^{8/3}N^{1/3}\bigr)$, yielding a total baseline scaling of $\widetilde{\mathcal O}\!\bigl((\eta^{4/3}N^{2/3}+\eta^{8/3}N^{1/3})\,\epsilon^{-1}\bigr)$
first given in ~\cite{su2021fault,babbush2018low}.
Notice that for systems where the first term dominates, it has been proposed to compute the kinetic operator \textit{on the fly} in the momentum basis within the interaction picture, thus avoiding incurring the cost associated with the \(\lambda_T\) value of the kinetic operator as usually done using LCU. The QPE cost following this technique was shown to be reduced to \(\mathcal{O}\left(\eta^{8/3} N^{1/3} \, \epsilon^{-1}\right)\)~\cite{babbush2019quantum}. Plane‐wave bases, though ideal for periodic solids, are poorly suited to chemistry and strongly correlated systems: their uniform spatial resolution forces massive oversampling of low‐density regions just to resolve core cusps and valence oscillations, leading to a significant computational overhead. \\
In practice, plane waves are always used in combination with pseudopotentials~\cite{troullier1991efficient}, which simplify the treatment of core regions by approximating core electrons as frozen and replacing their interactions with an effective external potential. Well aware of this, the embedding of pseudo-potentials to quantum simulation using planewaves has been proposed and quantified~\cite{berry2024quantum}. Nevertheless, the non locality of the pseudo potentials induces computational overheads or localization approximations, as usual in real-space calculations.
\section{Space Discretisation Tailored to Molecular Systems}
\subsection{Multi-center Molecular Grids}
In real-space grid methods, accurately resolving the Coulomb singularity requires local refinement near the nuclei. A natural strategy to increase point density in these regions is to construct a union of spherical grids centered around each nucleus, using a radial discretization that becomes progressively more concentrated near its center. Such discretizations have been extensively explored and optimized in the context of density functional theory (DFT), as they are commonly used for evaluating energy functionals via numerical integration over space. The radial discretization often follows a Becke scheme~\cite{Froese} which can be summarized as follows. 
Given a uniform partition $u_i$ of the interval $]0;1]$, the radial coordinate is discretized into $N_r$ points according to 
\begin{equation}
  r_{i}=-\alpha \ln \big( 1-u_{i}^\nu\big) \label{becke},
\end{equation}
where $\alpha$ and $\nu>0$ control both the overall range and the concentration of points around the origin. 
For the angular component, several options are available. One possibility is to use Gauss-Legendre quadratures~\cite{MATSUOKA1981387} such that the angular points are discretized into respectively $N_\theta$ and $N_\phi$ grid points $\theta_j \in [0;\pi]$ and $\phi_k\in [0;2\pi]$ via
\begin{equation}
  \begin{cases}
    \theta_j = \frac{\pi}{2}(x_{j}+1)\\
    \phi_{k} = \delta k,
  \end{cases} \label{gauss-leg}
\end{equation}
where $x_j$ is the $j$th root of the Legendre polynomial of degree $N_\theta$ (in ascending order), and the azimuthal angle repartition is uniform with constant step size $\delta$. Since the $\theta$ and $\phi$ partitions are built independently, every node has natural neighbors in $r$, $\theta$, and $\phi$, which simplifies finite--difference derivative approximations. However, as can be seen from Fig. \ref{fig:gauss}, were we illustrated the various schemes by representing the real-space grids obtained for the H$_2$O molecule, the Gauss-Legendre quadrature exhibits a dense clustering of points along lines of latitude close to the north and south poles of each sphere, thus breaking rotational symmetry. 
An alternative and potentially more adequate approach is to use the so-called Lebedev angular quadrature, which is widely adopted for accurate multi-center numerical integration schemes, among which those used for routine DFT calculations in molecular systems. It is based on a class of quadratures that are invariant around the octahedral point group, by solving non-linear equations that ensure invariant spherical harmonics~\cite{beentjesquadrature}. An illustrative example of a multicenter molecular grid combining Becke radial discretisation and Lebedev quadrature is shown in Fig. \ref{fig:lebedev}. From the latter plot, it clearly appears that, with respect to the grids based on the Gauss-Legendre angular discretization, the grids based on the Lebedev angular discretization are closer to the local rotational symmetry.
\begin{figure}[htp]
 \setkeys{Gin}{width=0.5\linewidth}
 \subfloat[Gauss–Legendre]{%
  \includegraphics[trim=100 200 100 200,clip]{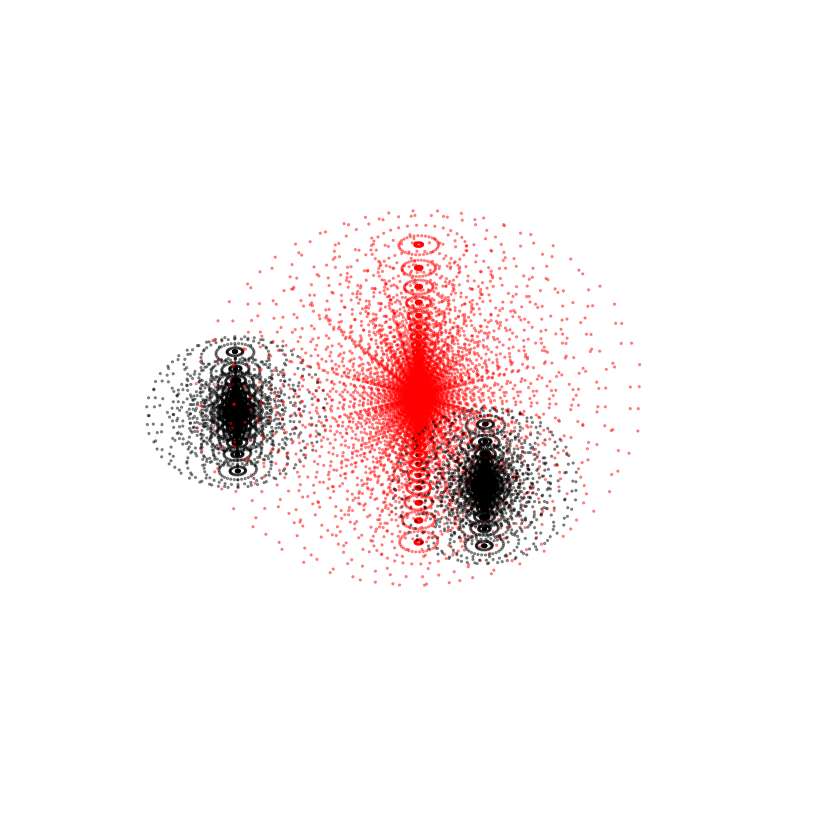}%
  \label{fig:gauss}%
 }
 \subfloat[Lebedev]{%
  \includegraphics[trim=100 200 100 200,clip]{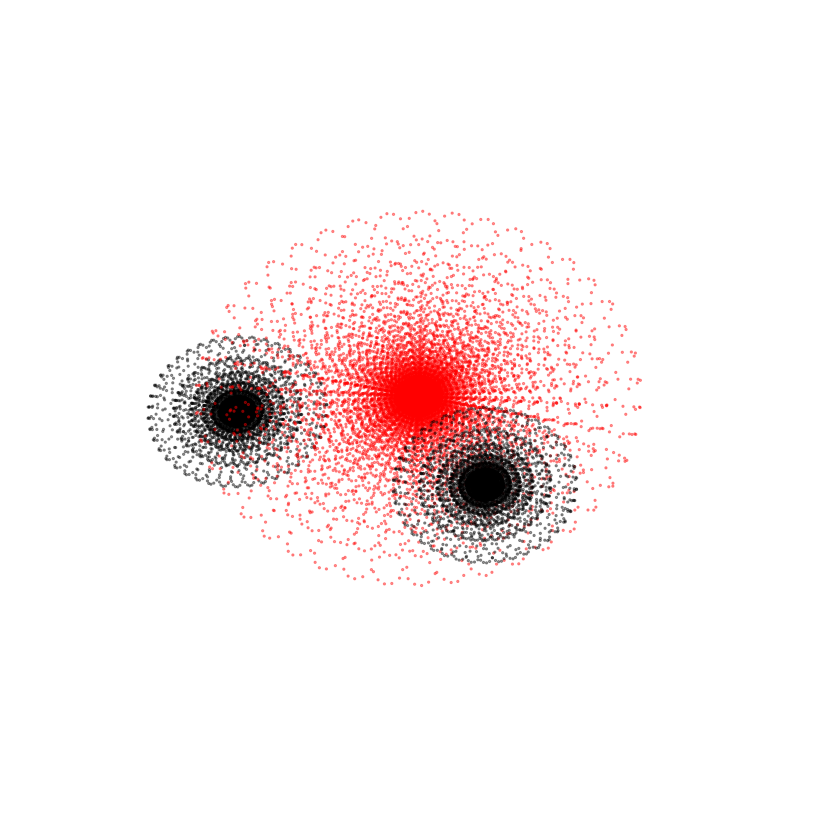}%
  \label{fig:lebedev}%
 }
 \caption{Visualization, in the case of a water molecule, of the two types of spatial discretizations used here, \textit{i.e.} Gauss-Legendre and Lebedev. Atomic distances are not at scale for clarity, and colors are only illustrative.}
 \label{fig:grids}
\end{figure}
\vspace{-0.5cm}
\subsection{Voronoi Finite Volume Discretization}

Although a finite‐difference scheme might seem a natural choice on the grid described above, there are disadvantages in doing so: (i) the presence of multiple grids makes it ambiguous to define nearest neighbors for a given coordinate, especially if the grids are overlapping and (ii) there is no simple linear relation between the Laplacian in a grid centered at the origin \(\nabla^2_{\mathbf{x}_i}\), and the Laplacian on a grid shifted to \(\mathbf{R}_\alpha\), \textit{i.e.} \(\nabla^2_{\mathbf{x}_i-\mathbf{R}_\alpha}\).
To avoid these issues, and without imposing any restrictions on the underlying grid, we adopt a finite‐volume approach using Voronoi cells. This scheme has previously been used to classically solve the transport-diffusion equations~\cite{DU20033933,poveda2023}, and has been used by~\cite{sukumar2003voronoi} to solve the molecular Schrödinger equation using similar molecular multicenter grids. We briefly report here the main concepts used for our present approach.
\vspace{-1em}
\subsubsection{Voronoi Diagrams}
\begin{definition}[Voronoi Cells and Diagram]
 Given a set $\mathcal{V}$ of $N$ arbitrarily distributed points $\{\mathbf{r}_m\}_{m=0}^{N-1} \subset \mathbb{R}^{3} $, the Voronoi cell $\text{Vor}(\mathbf{r}_m)$ associated with point $\mathbf{r}_m$ is defined as the region of space closer to $\mathbf{r}_m$ than to any other point. In mathematical terms:
\begin{equation}
   \text{Vor}(\mathbf{r}_m) = \left\{ \mathbf{r} \in \mathbb{R}^3 : \|\mathbf{r} - \mathbf{r}_m\| \leq \|\mathbf{r} - \mathbf{r}_n\| \quad \forall \,\,m \neq n \right\}.
\end{equation}
The Voronoi diagram of $\mathcal{V}$ is then the union of all Voronoi cells in $\mathcal{V}$:
\begin{equation}
  \text{Vor}(\mathcal{V})=\bigcup_{m=0}^{N-1}\text{Vor}(\mathbf{r}_m).
\end{equation}
\end{definition}
Note that the definition could be formulated for points in higher dimensions and for metrics other than the euclidean one. This partitioning yields a unique Voronoi diagram, in which the space is discretized into convex polyhedral cells~\cite{10.1145/116873.116880}.
\begin{definition}[Natural Neighbors and Voronoi Facets]
   Given a set $\mathcal{V}$ of $N$ arbitrarily distributed points $\{\mathbf{r}_m\}_{m=0}^{N-1} \subset \mathbb{R}^{3} $, a point $\mathbf{r}_n$ is called a natural neighbor of $\mathbf{r}_m$ if and only if 
   \begin{equation}
     \Gamma_{mn}\coloneqq\text{Vor}(\mathbf{r}_m) \cap \text{Vor}(\mathbf{r}_n) \neq \emptyset.
   \end{equation}
   We denote by $\Lambda(m)$ the set of indices of the natural neighbors of $\mathbf{r}_m$, \textit{ie}
   \begin{equation}
     \Lambda(m)=\{n:\Gamma_{mn}\neq \emptyset\},
   \end{equation}
   and we call $\Gamma_{mn}$, the intersection of the Voronoi cells of two natural neighbors $\mathbf{r}_m$ and $\mathbf{r}_n$, a Voronoi facet.
\end{definition}
By construction, $\Gamma_{mn}$ lies on the perpendicular bisector of the segment connecting these two points. In 3D, its area $\sigma_{mn}$ can be calculated from its surrounding vertices $\{\mathbf{\xi}_k\}$ via
\begin{equation}
\sigma_{mn} = \frac{1}{2} \sum_{k=0}^{\Xi-1} \left\| (\mathbf{\xi}_k - \mathbf{\rho}_{mn}) \times (\mathbf{\xi}_{k+1} - \mathbf{\rho}_{mn}) \right\|,\label{sigma_mn}
\end{equation}
where $\mathbf{\rho}_{mn} = (\mathbf{r}_m + \mathbf{r}_n)/2$, and $\Xi$ is the number of surrounding vertices, with $\mathbf{\xi}_{M+1} = \mathbf{\xi}_1$. The cell $ \text{Vor}(\mathbf{r}_m) $ has a volume $v_m$ that can be expressed as the sum of the volumes of pyramids with base of area $\sigma_{mn}$ and apex at $\mathbf{r}_m$:
\begin{equation}
v_m = \frac{1}{6} \sum_{n \in \Lambda(m)} |\mathbf{r}_{m}-\mathbf{r}_n| \sigma_{mn} \label{vol}.
\end{equation}
There exists many algorithms that provide the Voronoi diagrams of a given set of points, \textit{ie} the location of the vertices of each Voronoi facet with their areas and the volume of each Voronoi cell~\cite{doi:10.1137/040617364,doi:10.1137/S0036144599352836}. Although their time complexity in worse-case scenarios could scale superpolynomially, in practice and for well-structured sets of points, the Voronoi diagram can be computed in polynomial time~\cite{10.1145/1137856.1137880}. As an illustration, we display in figure \ref{fig:vor} the Voronoi diagram generated by the \textit{Qhull} library~\cite{Qhull} for a set of points in 2D.
\begin{figure}[htp]
\setkeys{Gin}{width=0.5\linewidth}
  \subfloat[$\nu=1$]{\includegraphics{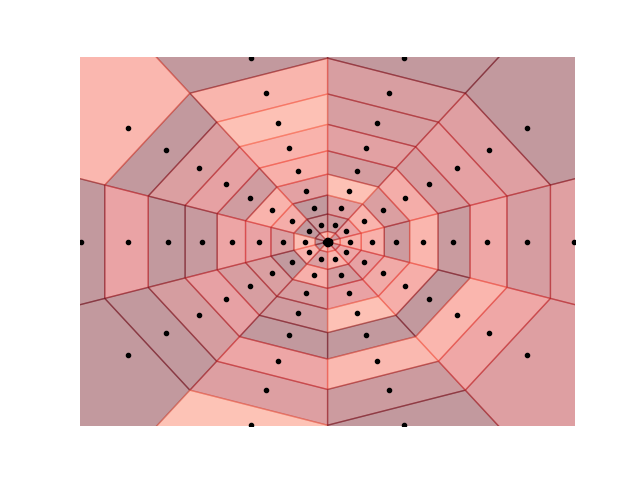}}\hfill
  \subfloat[$\nu=2$]{\includegraphics{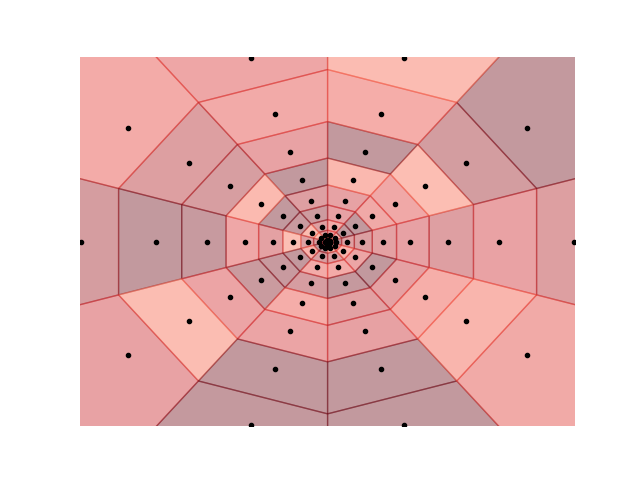}}
\caption{Voronoi diagram for a set of planar points following a Becke radial discretization \eqref{becke} and a uniform angular partition. The Voronoi cells of each black dot are delimited with random shades of red.}
\label{fig:vor}
\end{figure}
\vspace{-0.5cm}

\subsection{Wavefunction Representation}
Let us now address the problem of the representation of the wave function on the grids previously introduced.
Consider a single electron wavefunction discretized on a multi-center grid with $N$ points. It can be encoded in a $\log(N)$-qubit register as
\begin{equation}
  \ket{\psi_e} = \sum_{m=0}^{N-1}c_{m}\ket{m} \label{wavefun},
\end{equation}
where $m$ runs through the indices of the cells in the Voronoi diagram. The state \eqref{wavefun} resides in the $N$-dimensional Hilbert space $\mathcal{H}_{e}$ and the expansion coefficients $c_m$ satisfy the usual orthonormality condition \(\sum_{m=0}^{N-1} |c_{m}|^{2} = 1\), provided that $\langle m'\ket{m} = \delta_{mm'}$.
For an $\eta$-electron system, the wavefunction is represented by a quantum register formed from the tensor product of $\eta$ single-electron wavefunctions. The total state vector thus lives in $\mathcal{H}_{\text{tot}} = \bigotimes_{i=1}^{\eta} \mathcal{H}_e$, and can be encoded using $\eta \log(N)$ qubits. Electrons being fermions, the electronic wavefunction must live in the antisymmetric subspace of $\mathcal{H}_{\text{tot}}$, which we ensure using the gate anti-symmetrization gate described in~\cite{Berry_2018}. 

\vspace{-1em}
\subsubsection{Finite Volume Scheme} \label{finitevolumescheme}
We now give the matrix representation of the molecular Hamiltonian in the discretized space of the grid points. The mathematical derivation of the finite volume scheme is presented in Appendix \ref{VFV}. The main idea is to integrate the molecular Schrödinger equation over the volumes of the Voronoi cells occupied by each electron, and take the average in the limit of infinitesimally small Voronoi cell volumes so as to express volume and surface integrals in terms of the parameters of the Voronoi diagrams. Scalar operators, like the Coulombic potential terms, are represented as diagonal matrices with each diagonal elements being the value of the operator evaluated for the corresponding grid point. Ultimately, finding the ground-state energy boils down to find the smallest eigenvalue of the finite volume discretized Hamiltonian, which can be represented in matrix form as
\begin{equation}
\begin{split}
  \mathbf{{H}} &= \sum_{i=0}^{\eta-1}\Big{(}-\frac{1}{2}\sum_{m,n=0}^{N-1}\mathbf{{L}}_{mn}\ket{m}_{i}\bra{n}_{i} -\sum_{m=0}^{N-1}\mathbf{U}_{m}\ket{m}_{i}\bra{m}_{i}\Big{)}\\ &+\sum_{i=0}^{\eta-1}\sum_{j>i}^{\eta-1}\sum_{m,p=0}^{N-1}\mathbf{W}_{mp}\ket{m}_{i}\bra{m}_{i}\ket{p}_{j}\bra{p}_{j} ,\label{matform}
\end{split}
\end{equation}
where the discretized Laplacian matrix has matrix elements
\begin{equation}
\mathbf{L}_{mn} =
\begin{cases}
\displaystyle -\frac{1}{v_m} \sum_{k \in \Lambda(m)} \frac{\sigma_{mk}}{|\mathbf{r}_{m}-\mathbf{r}_{k}|} & \text{if } m = n, \\[1ex]
\displaystyle \frac{1}{v_m} \frac{\sigma_{mn}}{|\mathbf{r}_{m}-\mathbf{r}_{n}|} & \text{if } n \in \Lambda(m), \\[1ex]
0 & \text{otherwise,} \label{laplacian}
\end{cases}
\end{equation}
with $v_m$ and $\sigma_{mn}$ defined in Eqs. \eqref{sigma_mn} and \eqref{vol}, 
the Coulombic electron-nuclei attraction operator is diagonal in the position basis, with the following elements:
\begin{equation}
\mathbf{U}_{m} =
\sum_{\alpha=0}^{M-1} \frac{\mathcal{Z}_\alpha}{|\mathbf{r}_m - \mathbf{R}_\alpha|} ,
\label{oneb}
\end{equation}
and the two-body electron-electron Coulomb repulsion operator is diagonal in the tensor-product position basis:
\begin{equation}
\mathbf{W}_{mp} =
\frac{1}{|\mathbf{r}_m - \mathbf{r}_p|} \label{twob}.
\end{equation}
The notation $\ket{m}_{i}\bra{n}_{i}$ refers to the outer product $\ket{m}\bra{n}$ in the space of the $i$th electron. More precisely:
\begin{equation}
\begin{split}
  \ket{m}_{i}\bra{n}_{i} &=\underbrace{\mathbf{I}_N\otimes\cdots\otimes \mathbf{I}_{N}}_{i\textup{-times}} \otimes \ket{m}\bra{n} \otimes \underbrace{\mathbf{I}_N\otimes\cdots\otimes \mathbf{I}_{N}}_{(\eta-i+1)\textup{-times}} \\
  &= \mathbf{I}_{N^{i}} \otimes \ket{m}\bra{n} \otimes \mathbf{I}_{N^{\eta-i+1}}
\end{split}
\end{equation} 
The representation of the molecular Hamiltonian on a multi-center grid being rather unusual for the quantum chemistry community, it naturally calls some remarks regarding important numerical features such as convergence, sparsity and hermiticity. Firstly, finding the error associated to approximating an infinitesimal volume to the volume of a Voronoi cell is a complex problem-specific task, which we will not delve into. However, one can justify the convergence of the problem of finding ground state using this framework by considering the Schrödinger equation in imaginary time as a diffusion equation, whose Voronoi finite volume scheme was proven to be convergent~\cite{DU20033933,poveda2023}. Furthermore, in fixed-dimensional space, the Voronoi Laplacian is sparse, where the $m$-th row contains $|{\Lambda}(m)| + 1$ non-zero entries (one per neighbor and itself). This sparsity reflects the local connectivity of the Voronoi mesh, and does not scale with the total number of discretization points. Last but not least, due to the non-uniform grid spacing and the corresponding variation in Voronoi cell volumes, the discretized Laplacian does not have a Hermitian matrix representation. However, there is an easy way to symmetrize equation \eqref{laplacian} and re-obtain an Hermitian Hamiltonian, as it is done in Ref. \onlinecite{son2009theoretical}. In the latter scheme, the diagonal potential energy matrix remains intact while the modified Laplacian matrix, denoted by $\bar{\mathbf{L}}_{mn}$, has elements given by 
\begin{equation}
  \bar{\mathbf{L}}_{mn} = \sqrt{\frac{v_m}{v_n}}\mathbf{L}_{mn}. \label{bar_lmn}
\end{equation}

\section{Hermitian Hamiltonian Block-encoding for QPE}
Considering a molecular system described in first quantization by the Hamiltonian of Eq. \eqref{ps2}, its symmetrized representation in the grid position basis of $N$ points is then naturally rewritten as the sum of a one-body and two-body operators 
\begin{equation}
\begin{split}
  \mathbf{\bar{H}} &= \sum_{m,n=0}^{N-1}\mathbf{\bar{T}}_{mn}\sum_{i=0}^{\eta-1}\ket{m}_{i}\bra{n}_{i} \\ &+ \sum_{m,p=0}^{N-1}\mathbf{W}_{mp}\sum_{j>i}^{\eta-1}\ket{m}_{i}\bra{m}_{i}\ket{p}_{j}\bra{p}_{j},
\end{split}
\end{equation}
where $\mathbf{W}_{mp}$ is defined in equation \eqref{twob} and 
\begin{equation}
  \mathbf{\bar{T}}_{mn} = -\frac{1}{2}\mathbf{\bar{L}}_{mn}-
\begin{cases}
\displaystyle \mathbf{U}_m &\text{if } m=n,\\
0 & \text{otherwise,}
\end{cases} \label{symmoneb}
\end{equation}
and where $\mathbf{\bar{L}}_{mn}$ are the symmetrized matrix elements defined in Eq. \eqref{bar_lmn}, and $\mathbf{U}_m$ defined in Eq. \eqref{oneb}.

The block encoding can then be constructed using the systematic Pauli LCU decomposition in the first quantization, as introduced in~\cite{Georges_2025,georges2025quantum}, and is valid for arbitrary basis sets, including one that diagonalizes the potential energy. We quote the expression in~\cite{georges2025quantum} for the Pauli LCU decomposition of a Hamiltonian that can be decomposed into a one-body term and a diagonal two-body term:
\begin{equation}
\begin{split}
\mathbf{\bar{H}}&=\sum_{m,n=0}^{N-1}\omega_{mn}\sum_{i=0}^{\eta-1}\prod_{b=0}^{\log N-1}\mathbf{X}_{i\log N+b}^{m_b}\mathbf{Z}_{i\log N+b}^{n_b} \\&+ \frac{1}{2}\sum_{m,p=0}^{N-1}\gamma_{mp}\sum_{i\neq j}^{\eta -1}\prod_{b=0}^{\log N -1} \mathbf{Z}_{i\log N +b}^{m_b} \mathbf{Z}_{j\log N +b}^{p_b},
\end{split}
\end{equation}
where $m_b$ is the $b$-th bit in the binary representation of $m$, likewise for $n_b$ and $p_{b}$, and $\mathbf{X}_i$ and $\mathbf{Z}_i$ are Pauli operators acting on the $i$-th qubit:
\begin{equation}
  \begin{split}
    &\mathbf{X}_i = \mathbf{I}_{2^i}\otimes\mathbf{X}\otimes \mathbf{I}_{2^{\eta \log N -i-1}}\\
    & \mathbf{Z}_i = \mathbf{I}_{2^i}\otimes\mathbf{Z}\otimes \mathbf{I}_{2^{\eta \log N -i-1}}.
  \end{split}
\end{equation}
The explicit expressions of the coefficients are
\begin{equation}
  \omega_{mn}=\frac{1}{N}\sum_{x=0}^{N-1} (-1)^{x \odot n}\mathbf{\bar{T}}_{m\oplus x,x},\label{omega_mp}
\end{equation}
\begin{equation}
  \gamma_{mp}=\frac{1}{N^2}\sum_{x,y=0}^{N-1}(-1)^{(m \odot x)+(y \odot p)}\mathbf{W}_{xy}\label{gama_mp}.
\end{equation}
In Eqs. \eqref{omega_mp} and \eqref{gama_mp}, $\oplus$ is the bitwise XOR operation and $\odot$ is the bitwise dot product. The expressions of $\mathbf{\bar{T}}_{mn}$ and $\mathbf{W}_{mp}$ are given by equations \eqref{symmoneb}, \eqref{oneb} and \eqref{twob}. One can reduce the number of coefficients to load by neglecting terms with identity Pauli strings, since they only contribute to a constant shift of the eigenvalue. This is done by not considering coefficients for which both indices are zero. Also, one can incorporate repeating strings in the one-body and two-body terms into a single coefficient, which corresponds to that of the string containing a $\mathbf{Z}$ gate on only one of the qubits. Noticing that $\gamma_{mp}=\gamma_{pm}$, we can rewrite the LCU decomposition of the Hamiltonian with these modifications:
\begin{equation}
\begin{split}
\mathbf{\bar{H}}_{LCU}&=\sum_{m,n=0}^{N-1}\omega_{mn}'\sum_{i=0}^{\eta-1}\prod_{b=0}^{\log N-1}\mathbf{X}_{i\log N+b}^{m_b}\mathbf{Z}_{i\log N+b}^{n_b} \\&+ \frac{1}{2}\sum_{m,p=1}^{N-1}\gamma_{mp}'\sum_{i\neq j}^{\eta -1}\prod_{b=0}^{\log N -1} \mathbf{Z}_{i\log N +b}^{m_b} \mathbf{Z}_{j\log N +b}^{p_b},
\end{split}
\end{equation}
where 
\begin{equation}
\omega_{mn}'=
  \begin{cases}
\displaystyle 0& \text{if } m=n=0, \\
\omega_{0n} + (N-1)\gamma_{0n} & \text{if } m=0 \text{, } n\neq 0, \\
\omega_{mn} &\text{otherwise},
\end{cases}
\end{equation}
\begin{equation}
\gamma_{mp}'=
  \begin{cases}
\displaystyle 0& \text{if } m=0 \text{ or } p=0,\\
\gamma_{mp} &\text{otherwise}.
\end{cases}
\end{equation}The subscript LCU is added to differentiate between the true Hamiltonian and the one without identity strings. 
%The ressource estimate, \textit{ie} the number of required qubits and Toffoli gates, are all given in Table I of~\cite{Georges_2025}.

\section{Transcorrelated Molecular Hamiltonian}
Using a physically-relevant grid surely allows to accurately sample the high density regions, but exploring the regions near the cusps of the wave function inherently increases the norms of the operators, and thus the associated QPE complexity. The latter can be qualitatively understood by noticing that the Coulomb potential diverges at coalescence, and the norm of the Laplacian grows quadratically as the resolution increases (see \eqref{laplacian}). 
In this section, we explore the use of a \textit{transcorrelated} (TC) Hamiltonian~\cite{Boys_Handy_1969_determination} in order to alleviate the problems inherent to quantum mechanics with Coulomb divergences. This approach applies an isospectral similarity transformation by a so-called correlation factor, which when properly chosen replaces unbounded Coulomb potentials with non-diverging interactions, 
to the price of introducing non‐Hermitian differential terms together with a three-body interaction. 
The use of a TC Hamiltonian has multiple consequences which are important to highlight in the present context. Provided that the cusps conditions are included in the correlation factor itself, the effective interactions produced by the similarity transformation are non divergent~\cite{Boys_Handy_1969_determination,nooijen1998elimination,10.1063/1.5116024,Giner_2021}, and therefore the TC eigenfunctions are free of cusps. 
The latter point is crucial in the present context as it prevents the need for extremely dense grid points near the nuclei, and therefore reduces the operator norm of the Laplacian. 
%Also, as the TC wave function is relieved from the burden of the electron-electron cusp conditions, it is smoother near the electron-electron coalescence, which simplifies its description with a finite tensorized basis.
As mentioned above, the TC Hamiltonian exhibits a three-body interaction, thus increasing the complexity with the number of particles $\eta$, which nevertheless remains diagonal and bounded on the basis of the real-space grid and contributes only a constant offset to the operator norm.
A crucial aspect in the context of quantum-based algorithm arises from the non-Hermitian differential terms in the TC Hamiltonian, which prevent the use of qubitized QPE. To alleviate this problem, we instead propose to use the recently developed Quantum EigenValue Estimation (QEVE) algorithms, which efficiently handle non-Hermitian operators~\cite{Low_2024}. Appendix \ref{QCPE} explicitly outlines the QEVE algorithm, which, like qubitized QPE, relies on block-encoding the non-Hermitian Hamiltonian. Below, we detail the full workflow: deriving the transcorrelated Hamiltonian and its discretized form on the adaptive grid, building the Pauli‐LCU decomposition for QEVE compatibility, and validating the approach with numerical simulations of helium atom and hydrogen molecules.

\subsection{Molecular Transcorrelated Hamiltonian in First Quantization}\label{nhermtqc}
In this section, we briefly introduce the transcorrelated molecular Hamiltonian and its main features. Given a symmetrical function of the $\eta$-electron coordinates, labelled here $\tau(\{\mathbf{x}_i\})$, the transcorrelated Hamiltonian is then obtained by a similarity transformation of $\hat{H}$ by $e^{{\tau}}$, \textit{i.e.}
\begin{equation}
  \tilde{H}:=e^{-{\tau}} \hat{H} e^{{\tau}} = \hat{H}+\big{[}\hat{H},{\tau} \big{]} + \frac{1}{2}\big{[}[\hat{H},{\tau} ] ,{\tau}\big{]} .\label{sim}
\end{equation}
This equality is obtained from the Baker–Campbell–Hausdorff formula, which in the case of the TC Hamiltonian truncates at second-order, as the fundamental commutator $\big{[}\hat{H},{\tau} \big{]}$ involves a function and a second-order differential operator. This effective Hamiltonian introduces additional potential terms which modify the bare Coulomb interaction, and a non-hermitian first-order differential operator. 
Nonetheless, as any similarity transformation, it preserves the spectrum of the original operator, here the molecular Hamiltonian $\hat{H}$. 
Given parameters $\mu_{ne},\,\mu_{ee}\in \mathbb{R}^{+}$, and a function
\begin{equation}
  {\tau}(\{\mathbf{x}_i\}) = \sum_{i=0}^{\eta-1}\Big{(}\sum_{\alpha=0}^{M-1} g_{\mu_{ne}}(x_{i\alpha}) + \sum_{j>i}^{\eta-1} h_{\mu_{ee}}(x_{ij}) \Big{)}\label{tau}
\end{equation}
where $g_{\mu_{ne}}$ and $h_{\mu_{ee}}$ are respectively functions of the electron-nuclei and electron-electron distances $x_{i\alpha}=|\mathbf{x}_i - \mathbf{R}_\alpha|$ and $x_{ij}=|\mathbf{x}_i - \mathbf{x}_j|$, the similarity‑transformed (TC) Hamiltonian
\(\tilde H\) can be written as
{\small
\begin{equation}
\begin{split}
  \tilde H
  =\hat H -&\sum_{i,\alpha}\mathcal{\tilde{K}}[g_{\mu_{ne}}](\mathbf{x}_i,\mathbf{R}_\alpha)
            -\sum_{i,\alpha<\beta}\mathcal{\tilde{B}}[g_{\mu_{ne}}](\mathbf{x}_i,\mathbf{R}_\alpha,\mathbf{R}_\beta)\\
        -& \sum_{i<j}\mathcal{\tilde{K}}[h_{\mu_{ee}}](\mathbf{x}_i,\mathbf{x}_j)
            -\sum_{i<j<k}\mathcal{\tilde{B}}[h_{\mu_{ee}}](\mathbf{x}_i,\mathbf{x}_j,\mathbf{x}_k)
\end{split}
\end{equation}
}
where the effective two‑ and three‑body electron-electron operators read
{\small
\begin{equation}
\begin{split}
   \mathcal{\tilde{K}}[h_{\mu_{ee}}](\mathbf x_i,\mathbf x_j)&=
   \frac{1}{2}\Bigl[
      \nabla^{2}_{i}h+\nabla_{j}^2h
      +(\nabla_ih)^2+(\nabla_jh)^2
    \Bigr]\\&+\nabla_ih \!\cdot\!\nabla_i
  +\nabla_jh\!\cdot\!\nabla_j
\end{split}
\end{equation}
}
{\small
\begin{equation}
    \begin{split}
        \mathcal{\tilde{B}}[h_{\mu_{ee}}](\mathbf x_i,\mathbf x_j,\mathbf x_k) \;&=\;
  \nabla_i h_{ij}\!\cdot\!\nabla_i h_{ik}
  +\nabla_jh_{ji}\!\cdot\!\nabla_jh_{jk}
  \\&+\nabla_kh_{ki}\!\cdot\!\nabla_kh_{kj}
    \end{split}
\end{equation}
}
with $h \equiv h_{\mu_{ee}}$, \(h_{ij}=h(\mathbf{x}_i,\mathbf{x}_j)\) and \(\nabla_i\equiv\nabla_{\mathbf{x}_i}\). Similar expressions can analogously be easily written for the nucleus-electron operators $\mathcal{\tilde{K}}[g_{\mu_{ne}}]$ and $\mathcal{\tilde{B}}[g_{\mu_{ne}}]$.
An important feature is that, as in practical calculations the operator is discretized in a finite basis set, the original spectrum of $\hat{H}$ is only recovered in the limit of a complete basis set.  The advantage of the transcorrelation formalism relies therefore on the fact that if some physics in encoded in $\tau$, the convergence of the low-lying spectrum of the discretized transcorrelated Hamiltonian $\tilde{H}$ is faster than that of the discretized bare Hamiltonian $\hat{H}$. For instance, if the cusps conditions~\cite{kato1957eigenfunctions} are encoded within $\tau$, the right-eigenfunctions of $\tilde{H}$ are cusp-free, which suggests a faster convergence when expanded on a finite basis set. There exists therefore a wide variety of functional forms for this correlation factor $\tau$ which are all built in terms of the so-called one-, two- and three-body functions which depend on the electron-nuclei, electron-electron and electron-electron-nuclei coordinates, respectively. In this work, we choose the functions $g_{\mu_{ne}}$ and $h_{\mu_{ee}}$ in equation \eqref{tau} to be 
\begin{equation}
  g_{\mu_{ne}}(x_{i\alpha})=x_{i\alpha}\big{(}\text{erf}(\mu_{ne} x_{i\alpha})-Z_\alpha \big{)} + \frac{1}{\mu\sqrt{\pi}}e^{-(\mu_{ne} x_{i\alpha})^2} \label{g},
\end{equation}
\begin{equation}
  h_{\mu_{ee}}(x_{ij})= x_{ij}\big{(}1 - \text{erf}(\mu_{ee} x_{ij})\big{)} - \frac{1}{\mu_{ee}\sqrt{\pi}}e^{-(\mu_{ee} x_{ij})^2} \label{h}.
\end{equation}
as it is done in the work of ~\cite{Giner_2021}. The form of the functions in Eqs. \eqref{g} and \eqref{h} are such that they reproduce, at leading order in $1/x$, an effective smooth interaction instead of the bare diverging Coulombic interaction (see Ref. \onlinecite{Giner_2021} for more details). As the parameters $\mu_{ne}$ and $\mu_{ee}$ decrease, the effect of the correlation factor $\tau$ becomes more important.
\vspace{1em}

\subsection{Transcorrelated Hamiltonian Block Encoding}
As the transcorrelated Hamiltonian matrix $\mathbf{\tilde{H}}\in \mathbb{R}^{N^{\eta}\times N^{\eta}}$ is non-Hermitian matrix, finding its eigenvalues with a quantum algorithm can no longer be accessed through the usual qubitzed QPE procedure. Hence, one needs to consider a larger class of QEVE algorithms. Given an $\upsilon$ qubit register, a state $\ket{\psi} \in \mathbb{C}^{N^{\eta}}$ that is prepared close to the ground state of $\mathbf{\tilde{H}}$, and a subnormalization constant $\alpha_{\mathbf{\tilde{H}}}$ of $\mathbf{\tilde{H}}$, previous approaches were based on generating (up to a normalization constant) the state~\cite{shao2020computingeigenvaluesdiagonalizablematrices,shao2021solvinggeneralizedeigenvalueproblems} 
\begin{equation}
    \sum_{\ell=0}^{\upsilon-1}\ket{\ell}e^{2\pi i\ell \frac{\mathbf{\tilde{H}}}{\alpha_{\tilde{H}}}} \ket{\psi}
\end{equation} and measuring the eigenvalue from the phase that is kick-backed onto the ancilla qubits. However, this algorithm suffers from a suboptimal $\mathcal{O}(\epsilon^{-1} \cdot\text{polylog}(\epsilon^{-1}) )$ query complexity dependence on $\epsilon$. To resolve this,~\cite{Low_2024} introduced an alternative QEVE algorithm based on generating \textit{Chebyshev history states}, and whose query complexity regains the $\mathcal{O}(1/\epsilon)$ of QPE. We summarize this \textit{Quantum Chebyshev Phase Estimation} (QCPE) algorithm in Appendix \ref{QCPE} and we state the main theorem here. 
\begin{theorem}[Quantum Chebyshev Phase Estimation~\cite{Low_2024}] \label{main}
Let $\mathbf{H}$ be a square matrix with only real eigenvalues such that $\mathbf{H}/\alpha_H$ is a block encoded by $O_H$ with some normalization factor $\alpha_H \ge 2||\mathbf{H}||_2$. Suppose that
oracle $O_{\psi}\ket{0}=\ket{\psi}$ prepares an initial state within distance $\big{|}\big{|}\ket{\psi}-\ket{\psi_0}\big{|}\big{|}=\mathcal{O}(\frac{\epsilon}{\alpha_H \alpha_U})$ from an eigenstate $\ket{\psi_0}$ such that $\mathbf{H}\ket{\psi_0}=E_{0}\ket{\psi_0}$, where
\begin{equation}
    \alpha_{U} \ge \max_{\ell=0,1\dots \upsilon-1} ||U_{\ell}(\frac{\mathbf{H}}{\alpha_H})||
  \end{equation}
is an upper bound on the Chebyshev polynomials of the second kind $U_j(x)$ \eqref{cheb2} with $\upsilon=\mathcal{O}(\frac{\alpha_H}{\epsilon})$. Then, $E_0$ can be estimated with accuracy $\epsilon$ and probability $1-p_f$ using 
  \begin{equation}
    \mathcal{O}\Big{(}\frac{\alpha_{H}}{\epsilon}\alpha_{U}\log\big{(}\frac{1}{p_f}\big{)} \Big{)}
  \end{equation}
  queries to controlled-$O_H$, controlled-$O_\psi$ and their inverses.
\end{theorem}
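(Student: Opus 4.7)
The plan is to build a Chebyshev \emph{history state} whose ancilla-register amplitudes encode the unknown eigenvalue $E_0$ and then to extract $E_0$ by Fourier-type post-processing, adapting the standard qubitised QPE machinery to the non-Hermitian setting. The starting observation is that although $\mathbf{H}$ is not Hermitian, it has real spectrum and a right-eigenvector $\ket{\psi_0}$, so for any polynomial $p$ one still has $p(\mathbf{H}/\alpha_H)\ket{\psi_0} = p(E_0/\alpha_H)\ket{\psi_0}$. Choosing $p=U_\ell$ and preparing (up to subnormalisation)
\[
  \ket{\Phi} \;\propto\; \sum_{\ell=0}^{\upsilon-1}\ket{\ell}\otimes U_\ell\!\left(\frac{\mathbf{H}}{\alpha_H}\right)\ket{\psi_0}
\]
isolates the signal $\{U_\ell(E_0/\alpha_H)\}_\ell$ on the ancilla register; via $U_\ell(\cos\theta)=\sin((\ell+1)\theta)/\sin\theta$ this is a truncated Fourier series in the single angle $\theta_0 = \arccos(E_0/\alpha_H)$.

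I would implement the history state from $O_H$ as follows. Building the qubitised walk operator $W$ from $O_H$ and its reflection gives a block encoding of $T_\ell(\mathbf{H}/\alpha_H)$ after $\ell$ uses of $W$, and a simple linear combination on a flag qubit also yields $U_\ell$. The crucial efficiency step is that the whole superposition over $\ell=0,\dots,\upsilon-1$ can be prepared by a \emph{single} ancilla-controlled walk of depth $\upsilon$, reusing the partial products of $W$ between consecutive values of $\ell$, so the history state itself costs $\mathcal{O}(\upsilon)$ queries to $O_H$ rather than the naive $\mathcal{O}(\upsilon^2)$. Because $\mathbf{H}$ is non-Hermitian, the operators $U_\ell(\mathbf{H}/\alpha_H)$ are not unitary, so $\ket{\Phi}$ is subnormalised and lives in a block of a larger unitary whose success probability is controlled by $\alpha_U$; compensating with amplitude amplification on the block-encoding flag brings the effective per-run cost to $\mathcal{O}(\upsilon\,\alpha_U)$.

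Eigenvalue extraction then proceeds by applying an inverse QFT (or, equivalently, a Chebyshev-basis transform) on the ancilla register. With $\upsilon = \Theta(\alpha_H/\epsilon)$ modes, the Dirichlet-type peak of $\{U_\ell(E_0/\alpha_H)\}_\ell$ can be localised with angular resolution $\epsilon/\alpha_H$, which translates directly to resolution $\epsilon$ on $E_0$ itself. A median boost over $\mathcal{O}(\log(1/p_f))$ independent runs drives the failure probability below $p_f$, producing the advertised $\mathcal{O}((\alpha_H/\epsilon)\,\alpha_U\log(1/p_f))$ query count.

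The hard part is the robustness analysis, where non-Hermiticity really bites. Starting from $\ket{\psi}$ with $\|\ket{\psi}-\ket{\psi_0}\| = \delta$, the component of $\ket{\psi}$ orthogonal to $\ket{\psi_0}$ can be amplified by each $U_\ell(\mathbf{H}/\alpha_H)$ by up to $\|U_\ell(\mathbf{H}/\alpha_H)\|\le\alpha_U$, so its contamination of the ancilla amplitudes scales like $\alpha_U\delta$, which Fourier-processes into a bias on $E_0$ of order $\alpha_H\alpha_U\delta$; requiring this to be $\mathcal{O}(\epsilon)$ recovers exactly the hypothesis $\delta = \mathcal{O}(\epsilon/(\alpha_H\alpha_U))$. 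The two most delicate points are thus (i) verifying that the history-state generation really achieves linear-in-$\upsilon$ scaling when only a non-unitary block of $W$ encodes $\mathbf{H}$, and (ii) showing that the non-normal amplification of initial-state errors across all $\ell$ is uniformly captured by the single factor $\alpha_U$ — these are the two places where block-encoding and non-Hermiticity interact most nontrivially.
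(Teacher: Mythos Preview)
Your route diverges from the paper's in a way that hits a genuine obstruction. The paper (following Low~2024) does \emph{not} build the Chebyshev history state by iterating a qubitised walk. Instead it writes the generating function of the rescaled first-kind polynomials as the inverse of the banded matrix
\[
Pad\!\left(\tfrac{\mathbf{H}}{\alpha_H}\right)=\mathbf{I}_\upsilon\otimes\mathbf{I}+\mathbf{L}_\upsilon^{2}\otimes\mathbf{I}-2\,\mathbf{L}_\upsilon\otimes\tfrac{\mathbf{H}}{\alpha_H},
\]
and obtains the history state $\sum_\ell\ket{\ell}\otimes\tilde T_\ell(\mathbf{H}/\alpha_H)\ket{\psi}$ by invoking a \emph{quantum linear system solver} on $Pad(\mathbf{H}/\alpha_H)\ket{\Phi}=\tfrac12(\ket{0}-\ket{2})\otimes\ket{\psi}$. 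The factor $\alpha_U$ enters as (a bound on) the condition number of $Pad$, since $Pad^{-1}$ is precisely the generating function of the second-kind polynomials $U_\ell(\mathbf{H}/\alpha_H)$; that is why $\alpha_U$ appears in the query count even though the history state itself carries $T_\ell$, not $U_\ell$.

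The gap in your plan is the sentence ``building the qubitised walk operator $W$ from $O_H$ and its reflection gives a block encoding of $T_\ell(\mathbf{H}/\alpha_H)$ after $\ell$ uses of $W$''. That identity is the standard qubitisation result, but it depends essentially on $\mathbf{H}$ being Hermitian: the two-dimensional invariant-subspace (Jordan/Szegedy) structure that makes $W^\ell$ act as $T_\ell$ on the encoded operator breaks down when the block is non-normal. For non-Hermitian $\mathbf{H}$ the walk implements a \emph{singular-value} transformation, not the matrix polynomial $T_\ell(\mathbf{H}/\alpha_H)$ on eigenvectors, so $W^\ell\ket{\psi_0}$ does not yield $T_\ell(E_0/\alpha_H)\ket{\psi_0}$ (and likewise for $U_\ell$). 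This is exactly the obstruction that motivates the linear-system detour; you flagged point~(i) as ``delicate'', but it is in fact fatal to the direct-walk construction. Your downstream analysis --- the Fourier read-out, the $\upsilon=\Theta(\alpha_H/\epsilon)$ resolution count, and especially the derivation of the initial-state tolerance $\delta=\mathcal{O}(\epsilon/(\alpha_H\alpha_U))$ from non-normal amplification --- is sound and matches the paper, but the mechanism you propose for producing the history state has to be replaced by the $Pad$-matrix linear solve.
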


Since, similarly to qubitized-QPE, the procedure requires calling a block encoding of the Hamiltonian, and the query complexity depends on the subnormalization constant of this block encoding, we will outline how to block encode $\mathbf{\tilde{H}}$ using the Pauli LCU decomposition, similar to what was done in the Hermitian case. 
With the choices made for $g_{\mu_{ne}}$ and $h_{\mu_{ee}}$ in equations \eqref{g} and \eqref{h}, and the expressions derived in Ref.~\cite{Giner_2021}, we can proceed as previously and express the transcorrelated Hamiltonian in the basis of the grid points using the Voronoi finite volume scheme:
\begin{widetext}

\begin{equation}
\begin{split}
  \mathbf{\tilde{H}} &= \sum_{i=0}^{\eta-1} \Big{(}\sum_{m,n=0}^{N-1}(-\frac{1}{2}\mathbf{L}_{mn}-\mathbf{\tilde{D}}_{mn}^{(ne)})\ket{m}_{i}\bra{n}_{i} - \sum_{m=0}^{N-1}\mathbf{\tilde{U}}_{m}\ket{m}_{i}\bra{m}_{i} - \sum_{j>i}^{\eta-1}\sum_{m,p=0}^{N-1} \mathbf{\tilde{W}}_{mp}\ket{m}_{i}\bra{m}_{i}\ket{p}_{j}\bra{p}_j \\
  &- \sum_{j>i}^{\eta-1} \;\sum_{m,n,p,q=0}^{N-1} \mathbf{\tilde{D}}_{mnpq}^{(ee)} \ket{m}_{i}\bra{n}_{i}\ket{p}_{j}\bra{q}_{j} - \sum_{j>i}^{\eta-1}\sum_{k>j>i}^{\eta-1} \;\sum_{m,p,t=0}^{N-1} \mathbf{\tilde{{B}}}_{mpt}\ket{m}_{i}\bra{m}_{i}\ket{p}_{j}\bra{p}_{j}\ket{t}_{k}\bra{t}_{k}\Big{)}
\end{split}
\end{equation}
where 
\begin{equation}
\mathbf{\tilde{D}}_{mn}^{(ne)}=
\begin{cases}
\displaystyle \sum_{\alpha=0}^{M-1}\frac{-\mathcal{Z}_{\alpha}+\text{erf}(\mu r_{m\alpha})}{2 v_m} \sigma_{mn}\ \mathbf{\hat{r}}_{mn} \cdot \hat{\mathbf{r}}_{m\alpha} & \text{if } n \in \Lambda(m)\\[1ex]
0 & \text{otherwise}
\end{cases}  \label{nh1}
\end{equation}

\begin{equation}
\mathbf{\tilde{D}}_{mnpq}^{(ee)} =
\begin{cases}
\displaystyle \frac{1-\text{erf}(\mu r_{mp})}{2}\Big{(} \frac{\sigma_{mn}}{v_m}\mathbf{\hat{r}}_{mn} - \frac{\sigma_{pq}}{v_p} \mathbf{\hat{r}}_{pq} \Big{)} \cdot \hat{\mathbf{r}}_{mp} & \text{if } n \in \Lambda(m) \text{ and } q \in \Lambda(p) \\[1ex]
0& \text{otherwise}
\end{cases} \label{nh2}
\end{equation}

\begin{equation}
\begin{split}
\mathbf{\tilde{U}}_{m} =&
\sum_{\alpha=0}^{M-1}\Big{(}\frac{-\mathcal{Z}_{\alpha}+\text{erf}(\mu r_{m\alpha})}{r_{m\alpha}} - \frac{\mu}{\sqrt{\pi}}e^{-(\mu r_{m\alpha})^2}+ \frac{(\mathcal{Z}_{\alpha}-\text{erf}(\mu r_{m\alpha}))^2}{2} \Big{)} \\
&+ \sum_{\alpha=0}^{M-1}\sum_{\beta=0}^{M-1}\Big{(}\big{(}\mathcal{Z}_{\alpha}-\text{erf}(\mu r_{m\alpha})).(\mathcal{Z}_{\beta}-\text{erf}(\mu r_{m\beta})\big{)} \,\hat{\mathbf{r}}_{m\alpha}. \hat{\mathbf{r}}_{m\beta} \Big{)}
\end{split}
\end{equation}

\begin{equation}
  \mathbf{\tilde{{W}}}_{mp} = \frac{1-\text{erf}(\mu r_{mp})}{r_{mp}} + \frac{\mu}{\sqrt{\pi}}e^{-(\mu r_{mp})^2}+ \frac{(1-\text{erf}(\mu r_{mp}))^2}{2}
\end{equation}

\begin{equation}
  \mathbf{\tilde{{B}}}_{mpt}=\big{(}1-\text{erf}(\mu r_{mp})).(1-\text{erf}(\mu r_{mt})\big{)} \,\hat{\mathbf{r}}_{mp}. \hat{\mathbf{r}}_{mt} \label{three}
\end{equation}
\end{widetext}

which can be rewritten as
\begin{equation}
\begin{split}
  \mathbf{\tilde{H}} &= \sum_{m,n=0}^{N-1}\mathbf{\tilde{T}}_{mn} \sum_{i=0}^{\eta-1}\ket{m}_{i}\bra{n}_i \\&+\frac{1}{2}\sum_{m,n,p,q=0}^{N-1}\mathbf{\tilde{W}}_{mnpq}\sum_{i\neq j}^{\eta-1}\ket{m}_{i}\bra{n}_{i}\ket{p}_{j}\bra{q}_{j}\\&-\frac{1}{3}\sum_{m,p,t=0}^{N-1}\mathbf{\tilde{B}}_{mpt}\sum_{i\neq j\neq k}^{\eta-1}\ket{m}_{i}\bra{m}_{i}\ket{p}_{j}\bra{p}_{j}\ket{t}_{k}\bra{t}_{k} \label{transdiv}
\end{split}
\end{equation}where $\mathbf{\tilde{B}}_{mpt}$ is defined in \eqref{three} and 
\begin{equation}
  \mathbf{\tilde{T}}_{mn}=-\frac{1}{2}\mathbf{L}_{mn} - \mathbf{\tilde{D}}_{mn}^{(ne)}-\begin{cases}
\displaystyle \mathbf{\tilde{U}}_m &\text{if } m=n\\
0 & \text{otherwise.}
\end{cases} 
\end{equation}
\begin{equation}
\mathbf{\tilde{W}}_{mnpq} = -\mathbf{\tilde{D}}_{mnpq}^{(ee)}-
\begin{cases}
\displaystyle \mathbf{\tilde{W}}_{mp} &\text{if } m=n \text{ and } p=q\\
0& \text{otherwise}
\end{cases} 
\end{equation}
we adopt a similar approach to block encode the transcorrelated Hamiltonian using an adapted LCU decomposition:
\pagebreak
{\small
\begin{equation}
\begin{split}
  &\mathbf{\tilde{H}} = \sum_{m,n=0}^{N-1}\tilde{\omega}_{mn}\sum_{i=0}^{\eta-1}\prod_{b=0}^{\log N -1}\mathbf{X}_{i\log N +b}^{m_b}\mathbf{Z}_{i\log N +b}^{n_b}\\
  &+\frac{1}{2}\sum_{m,n,p,q=0}^{N-1}\tilde{\gamma}_{mnpq}\sum_{i\neq j}^{\eta-1}\prod_{b=0}^{\log N -1}\mathbf{X}_{i\log N +b}^{m_b}\mathbf{Z}_{i\log N +b}^{n_b}
  \mathbf{X}_{j\log N +b}^{p_b}\mathbf{Z}_{j\log N +b}^{q_b} \\
  &- \frac{1}{3}\sum_{m,p,t=0}^{N-1}\tilde{\zeta}_{mpt}\sum_{i\neq j \neq k}^{\eta-1}\prod_{b=0}^{\log N -1} \mathbf{Z}_{i\log N +b}^{m_b} \mathbf{Z}_{j\log N +b}^{p_b} \mathbf{Z}_{k\log N +b}^{t_b} 
\end{split}
\end{equation}
}
The explicit expressions for the coefficients are given by:
\begin{equation}
  \tilde{\omega}_{mn}=\frac{1}{N}\sum_{x=0}^{N-1}(-1)^{x\odot n} \mathbf{\tilde{T}}_{m\oplus x,x}
\end{equation}
\begin{equation}
  \tilde{\gamma}_{mnpq} = \frac{1}{N^2}\sum_{x,y=0}^{N-1}(-1)^{(x\odot n)+(y\odot q)}\mathbf{\tilde{W}}_{m\oplus x,x,p \oplus y,y}
\end{equation}
\begin{equation}
  \tilde{\zeta}_{mpt}=\frac{1}{N^3}\sum_{x,y,z=0}^{N-1} (-1)^{(x\odot m) + (y\odot p) + (z\odot t)}\mathbf{\tilde{B}}_{mpt}
\end{equation}
\begin{widetext}
\begin{equation}
  \begin{split}
  \mathbf{\tilde{H}}_{LCU} &= \sum_{m,n=0}^{N-1}\tilde{\omega}_{mn}'\sum_{i=0}^{\eta-1}\prod_{b=0}^{\log N -1}\mathbf{X}_{i\log N +b}^{m_b}\mathbf{Z}_{i\log N +b}^{n_b}
  +\frac{1}{2}\sum_{m,n,p,q=0}^{N-1}\tilde{\gamma}_{mnpq}'\sum_{i\neq j}^{\eta-1}\prod_{b=0}^{\log N -1}\mathbf{X}_{i\log N +b}^{m_b}\mathbf{Z}_{i\log N +b}^{n_b}
  \mathbf{X}_{j\log N +b}^{p_b}\mathbf{Z}_{j\log N +b}^{q_b} \\
  &- \frac{1}{3}\sum_{m,p,t=1}^{N-1}\tilde{\zeta}_{mpt}'\sum_{i\neq j \neq k}^{\eta-1}\prod_{b=0}^{\log N -1} \mathbf{Z}_{i\log N +b}^{m_b} \mathbf{Z}_{j\log N +b}^{p_b} \mathbf{Z}_{k\log N +b}^{t_b} 
\end{split} \label{transLCU}
\end{equation}
where 
\begin{equation}
  \tilde{\omega}_{mn}'=
  \begin{cases}
\displaystyle 0 &\text{if } m=n=0\\
\tilde{\omega}_{0n}+\frac{1}{2}(N-1)(\tilde{\gamma}_{0n00}+\tilde{\gamma}_{000n})-\frac{1}{3}(N-1)^{2}(\tilde{\zeta}_{n00}+2\tilde{\zeta}_{00n})& \text{if }m=0\\
\tilde{\omega}_{mn}+\frac{1}{2}(N-1)(\tilde{\gamma}_{mn00}+\tilde{\gamma}_{00mn})& \text{otherwise}
\end{cases} 
\end{equation}

\begin{equation}
  \tilde{\gamma}_{mnpq}'=
  \begin{cases}
\displaystyle 0 &\text{if } m=n=0 \text{ or } p=q=0\\
\tilde{\gamma}_{0n0q} - \frac{1}{3}(N-1)(\tilde{\zeta}_{0nq}+2\tilde{\zeta}_{nq0})& \text{if }m=p=0 \\
\tilde{\gamma}_{mnpq} &\text{otherwise}
\end{cases} 
\end{equation}

\begin{equation}
  \tilde{\zeta}_{mpt}'=
  \begin{cases}
\displaystyle 0 &\text{if } m=0 \text{ or } p=0 \text{ or } t=0\\
\tilde{\zeta}_{mpt}& \text{otherwise}
\end{cases}  \label{lastLCU}
\end{equation}
\end{widetext}

Similarly to the two-body LCU expansion derived in~\cite{georges2025quantum}, one can straightforwardly write one for the three-body term. Note that ${\mathbf{\tilde{T}}_{mn}}$ and ${\mathbf{\tilde{W}}_{mnpq}}$ contain non-Hermitian terms, and the latter two-body matrix is no longer diagonal. As before, we remove identity Pauli strings and repeating terms to get fewer coefficients to load. For the three-body term, we have the symmetry $\tilde{\zeta}_{mpt}=\tilde{\zeta}_{mtp}$, which enables to write equations \eqref{transLCU} to \eqref{lastLCU}.\\
\begin{comment}
Regarding the resource estimates of the block encoding using the $\texttt{PREPARE-SELECT}$ protocol, it is exactly the same as in the Hermitian case, except for two quantities. First, since the two-body term is no longer diagonal, which resulted in the presence of double $\mathbf{XZ}$ strings in the Pauli LCU decomposition, a Toffoli gate has to be added before each double controlled $\mathbf{Z}$ gate in the $\texttt{SELECT}$ circuit, which corresponds to the Toffoli count on an arbitrary basis equal to $2\eta(1+2\log N)$. Additionally, it is necessary to implement triple strings $\mathbf{Z}$ because of the diagonal three-body term, elevating the Toffoli count by $\eta(1+\log N)$, for a total of $\eta(3+5\log N)$ for the $\texttt{SELECT}$ operator. Similarly, the output size of the QROAM goes from $\mathcal{N}+2(3\log N+1)$ to $\mathcal{N}+3(\log N+\frac{5}{3})$, where $\mathcal{N}$ matrixis the number of bits for keep-probabilities in coherent alias sampling~\cite{georges2025quantum}.
\end{comment}
\vspace{-1em}
\section{Computational details}
This section outlines the numerical setup for the simulations presented in the next section. Voronoi diagrams were generated using the Qhull library~\cite{Qhull}. For one‐electron Hamiltonians, whose matrices comprise only a few thousand entries, we obtained exact spectra via direct diagonalization with standard linear‐algebra routines. Two‐electron systems square the Hilbert‐space dimension, so we employed the parallel, non‐Hermitian Davidson solver~\cite{doi:10.1137/0915004} in Quantum Package 2.0~\cite{qp2} to compute ground‐state energies and wavefunctions. The initial guess in the Davidson solver was a cc-PVDZ Hartree–Fock solution evaluated on the grid, conformingly with the state-preparation strategy proposed for the quantum algorithm (see Appendix \ref{initial_state}).

\section{Simulations}
In this section, we present the results of numerical simulations to support our Voronoi finite volume scheme and its transcorrelated extension. We first discuss the impact of the correlation factor on the discretized wave functions expressed on our real-space grids, and then report some results on energy differences.

Let us begin by investigating the impact of the nucleus-electron correlation factor on the wave function in the case of one-electron systems. 
We report in Figure \ref{fig:combined_scatter} the plots of the ground-state wave functions obtained for the H atom and $\text{H}_2^+$ system at 1 Å, with a uniform angular and Becke radial discretization ($\nu=1$) for 3000 total points, and for various choices of the $\mu_{ne}$ parameter tuning the one-body part of $\tau$. As can be seen from Figure \ref{fig:combined_scatter}, the function $g_{\mu_{ne}}$ smooths out the sharp nuclear cusps, and decreasing the parameter $\mu_{ne}$ progressively broadens the central peaks, reducing the need for excessively dense grids at the cusps.

\begin{figure}[h]
 \centering
 \setlength{\tabcolsep}{2pt} % tight horizontal spacing
 \renewcommand{\arraystretch}{0.5} % tight vertical spacing

 % ---------- row: H ----------
 \begin{tabular}{ccc}
  \subcaptionbox{Non-TC, $\mu_{ne}\!=\!\!\infty\!$\label{fig:H_std}}
   [0.32\linewidth]{%
    \includegraphics[width=\linewidth,trim=120 100 120 119,clip]{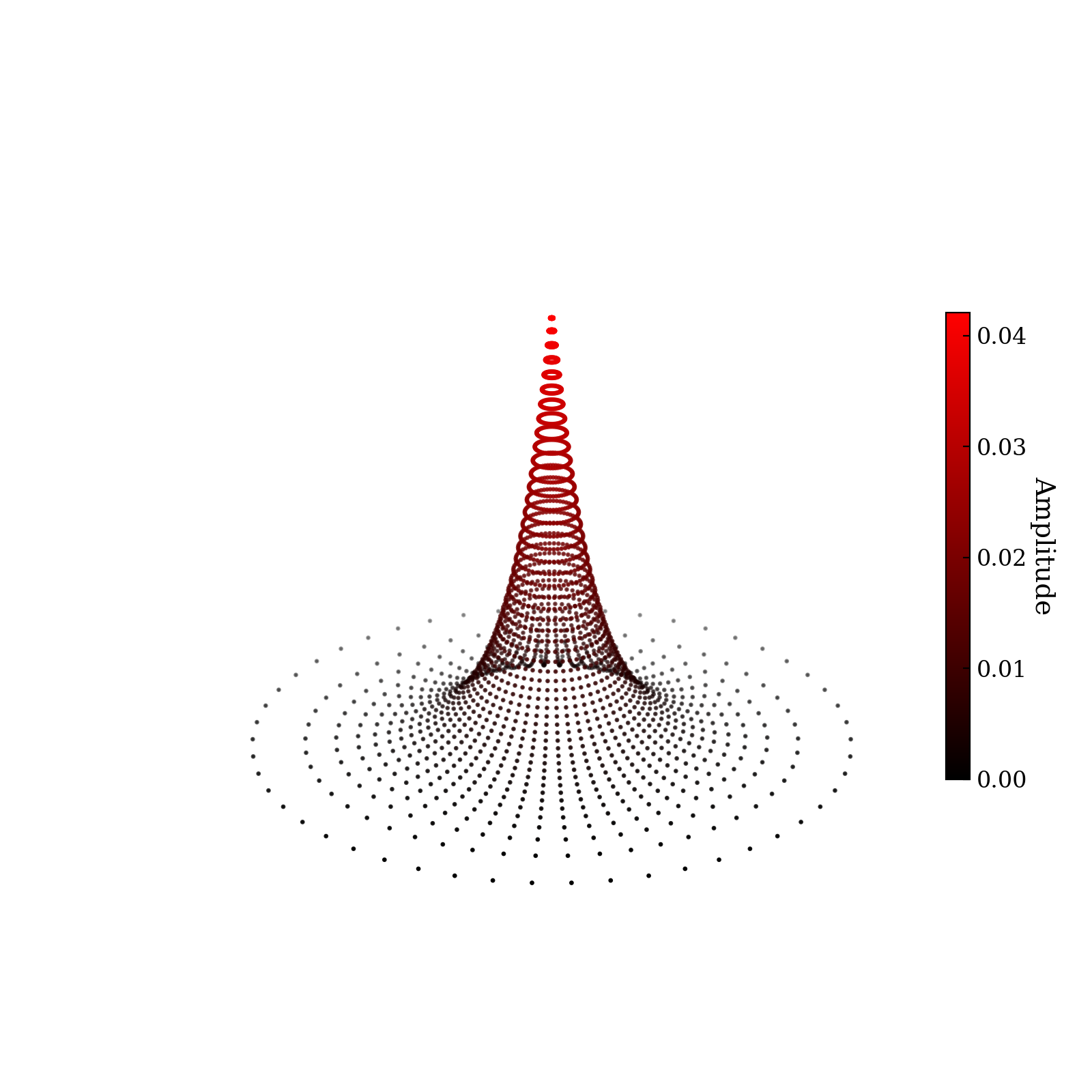}}
  &
  \subcaptionbox{$\mu_{ne}=3$\label{fig:H_mu3}}
   [0.32\linewidth]{%
    \includegraphics[width=\linewidth,trim=120 100 120 119,clip]{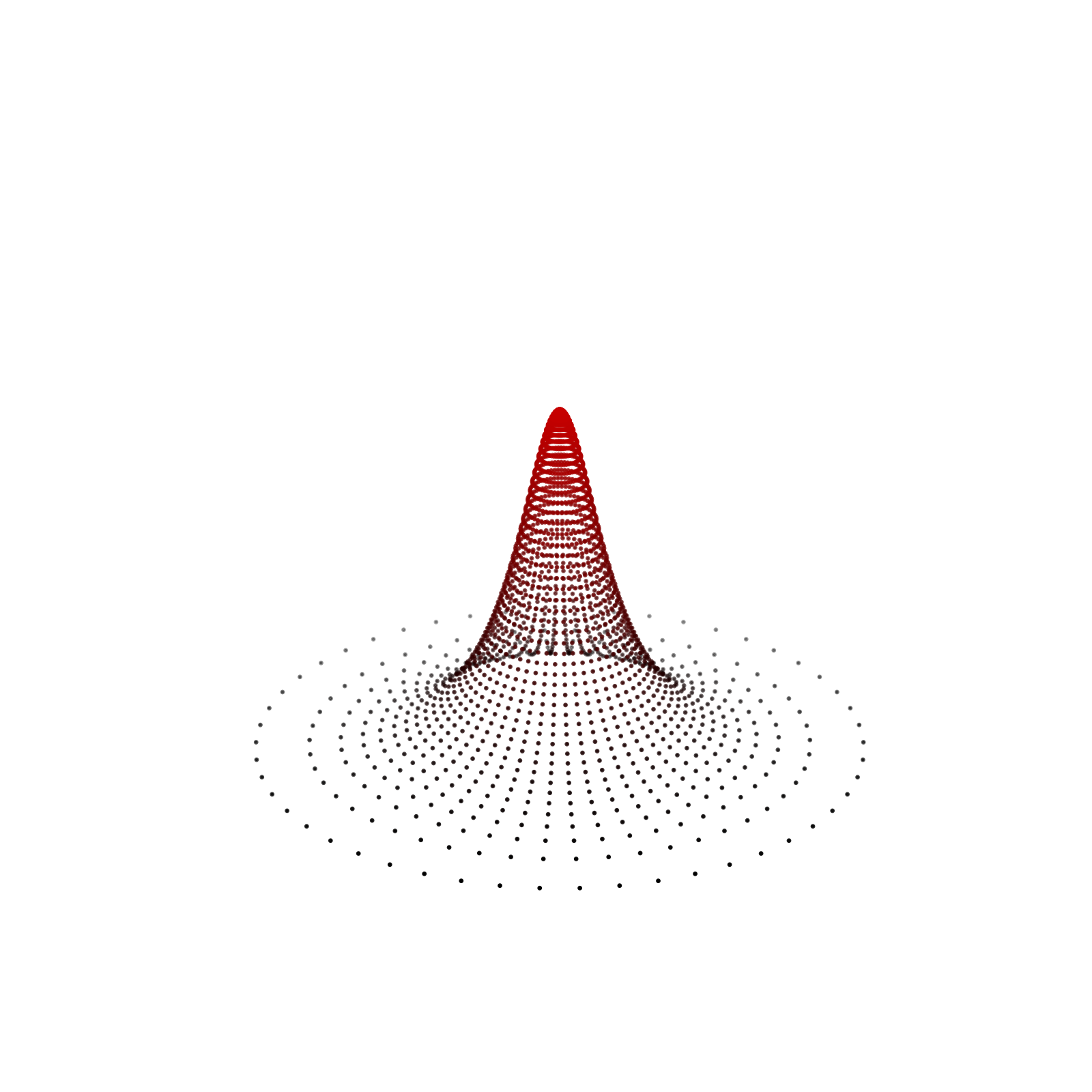}}
  &
  \subcaptionbox{$\mu_{ne}=1$\label{fig:H_mu1}}
   [0.32\linewidth]{%
    \includegraphics[width=\linewidth,trim=120 100 120 119,clip]{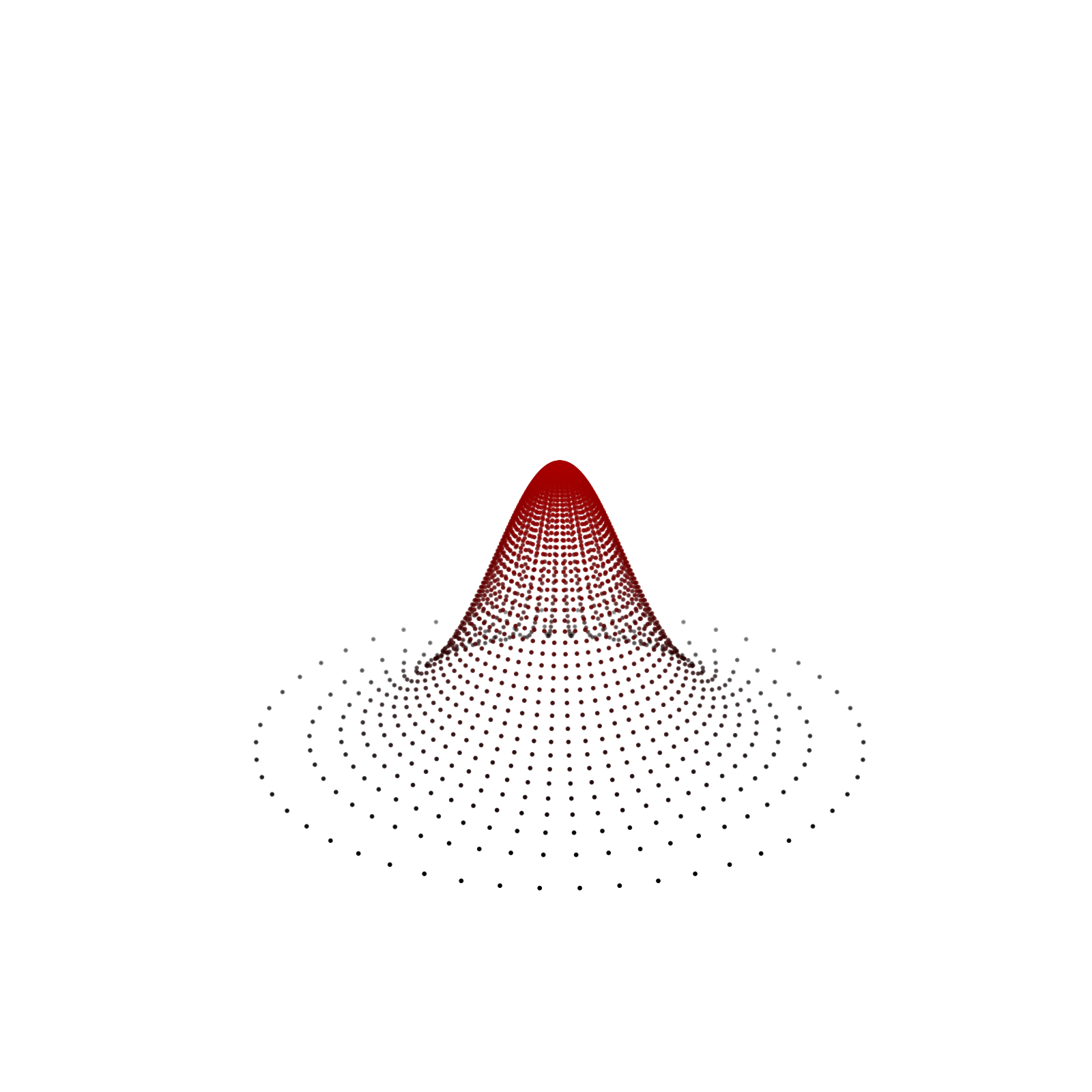}}
  \\[2pt]
  % ---------- row: H2+ ----------
  \subcaptionbox{Non-TC\label{fig:H2_std}}
   [0.32\linewidth]{%
    \includegraphics[width=\linewidth,trim=120 110 120 119,clip]{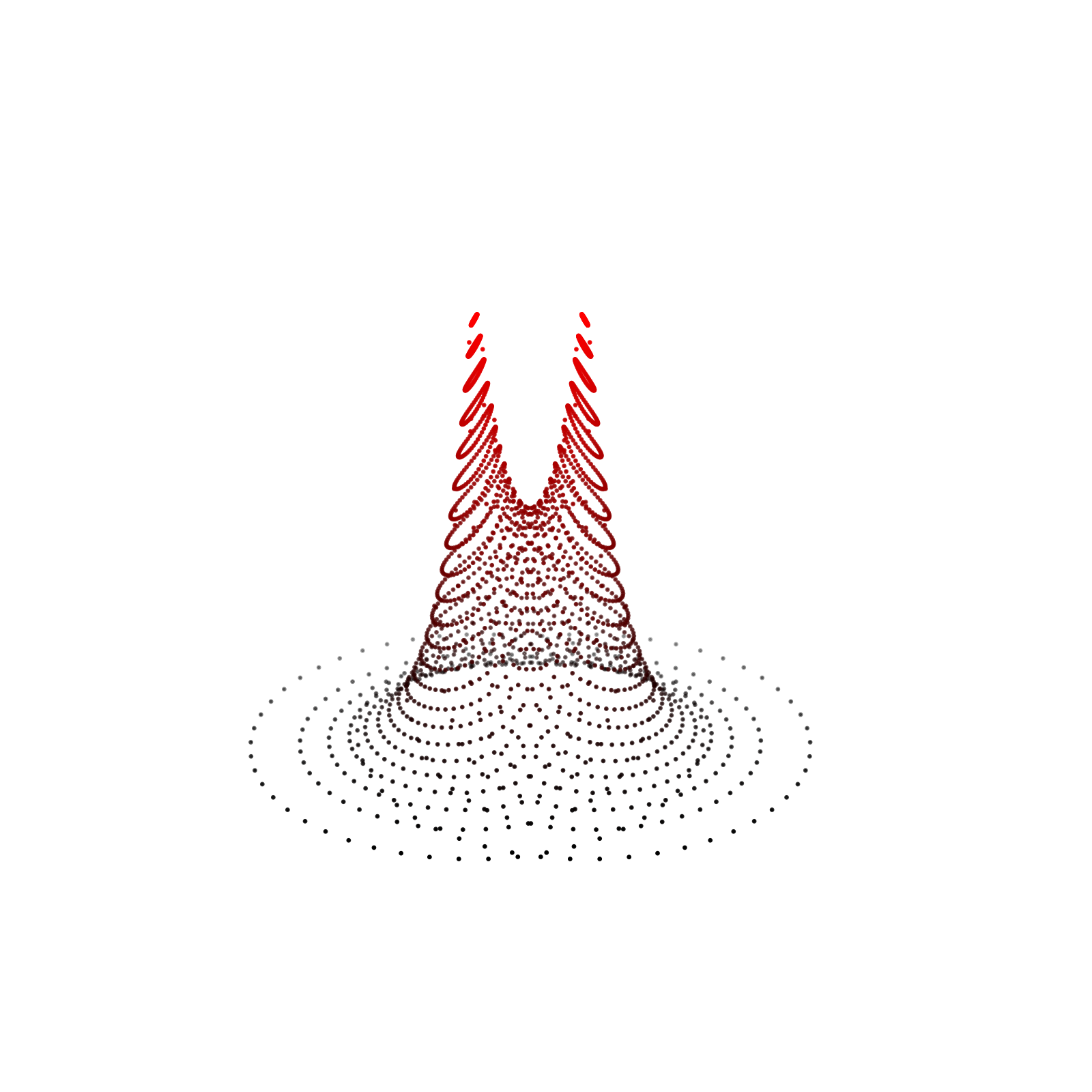}}
  &
  \subcaptionbox{$\mu_{ne}=3$\label{fig:H2_mu3}}
   [0.32\linewidth]{%
    \includegraphics[width=\linewidth,trim=120 110 120 119,clip]{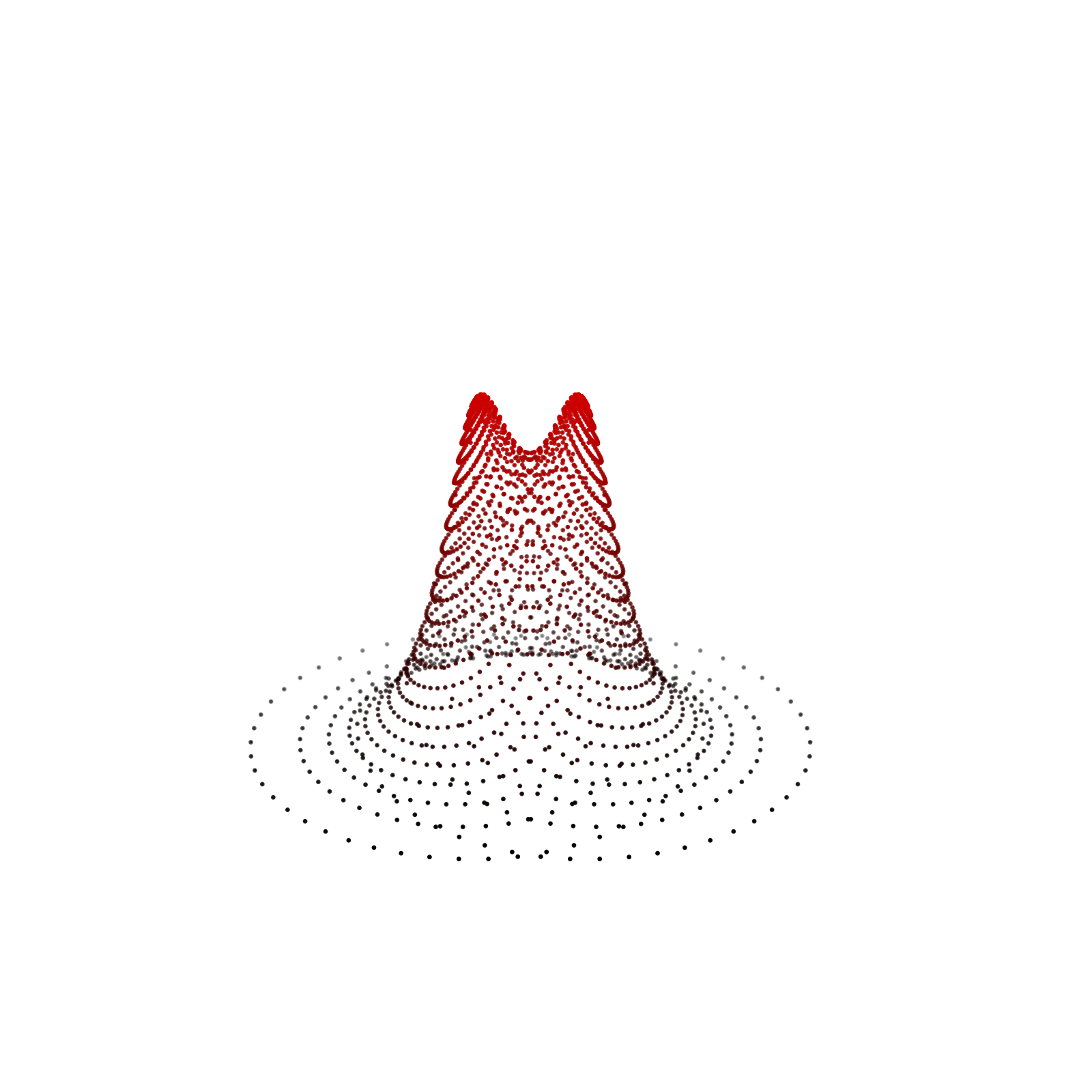}}
  &
  \subcaptionbox{$\mu_{ne}=1$\label{fig:H2_mu1}}
   [0.32\linewidth]{%
    \includegraphics[width=\linewidth,trim=120 110 120 119,clip]{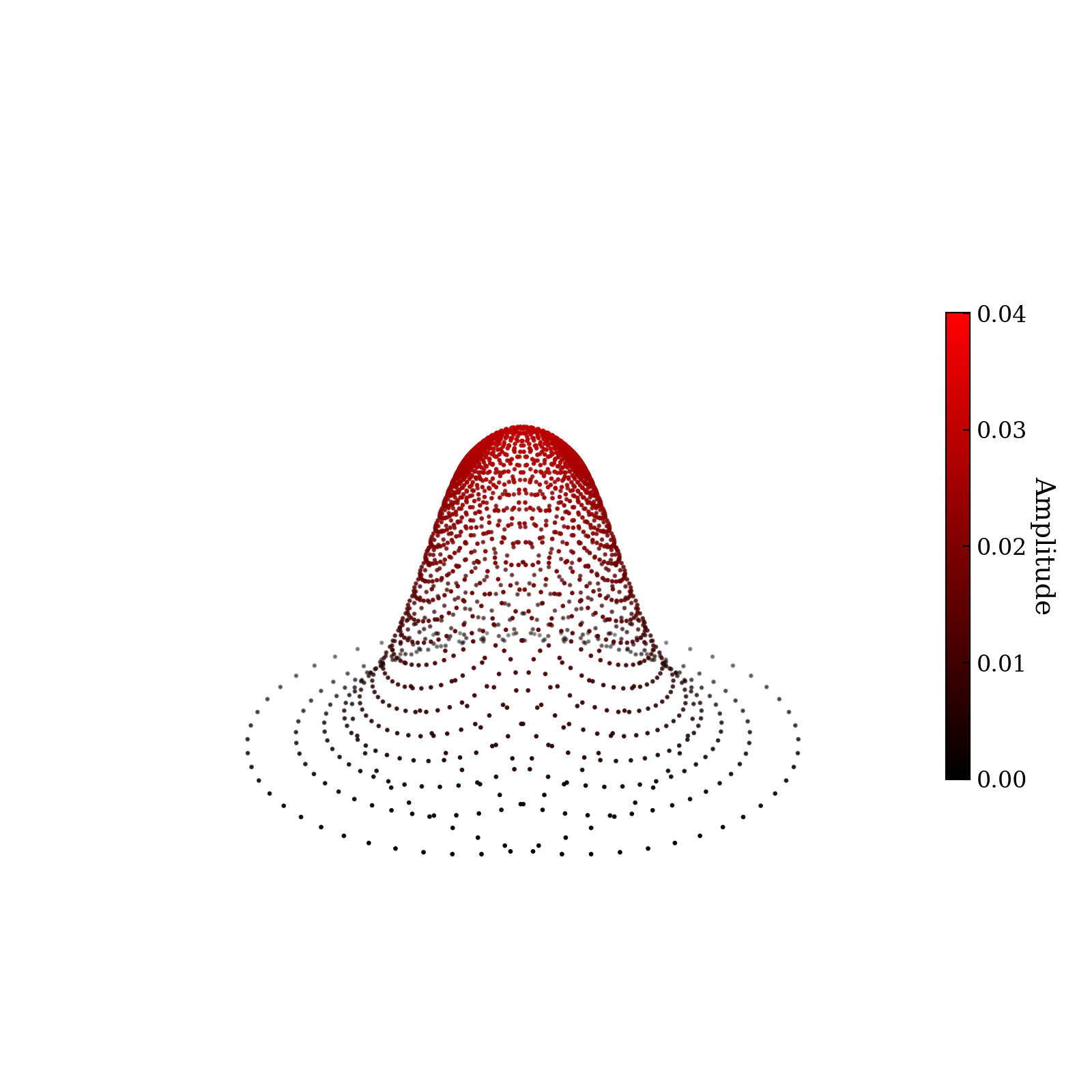}}
 \end{tabular}

 \caption{Sketches of ground‐state wavefunctions on a 2-D grid for single-center (top) and two-center (bottom) systems. Columns show the non-transcorrelated case (left) and transcorrelated cases for different values of $\mu_{ne}$.}
 \label{fig:combined_scatter}
\end{figure}

\begin{figure}[h]
 \centering
 \includegraphics[width=\linewidth]{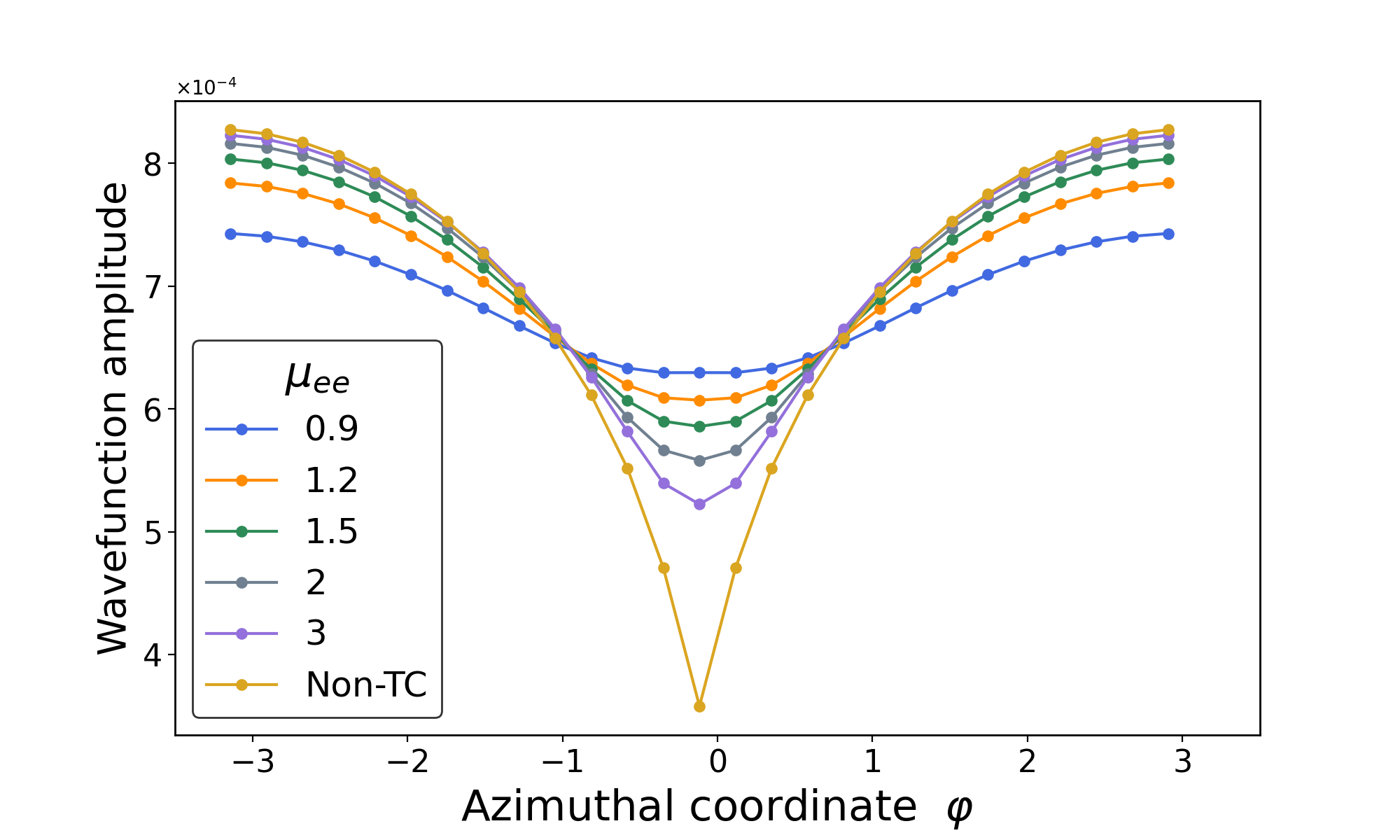}
 \caption{Cut of the ground‐state helium wavefunction sampled along a circle of radius \(\sim0.5\,a_0\) centered on the nucleus, with the second electron fixed at the central azimuthal coordinate. The amplitude, plotted versus \(\varphi\) on a Gauss–Legendre angular grid (Voronoi finite‐volume discretization), shows a progressively smoother electron–electron cusp as the transcorrelation parameter \(\mu_{ee}\) increases.}
 \label{fig:eecusps}
\end{figure}
Turning now to a two-electron system, we represent in Figure \ref{fig:eecusps} a cutting of the ground-state wavefunction of the helium atom obtained with the method presented in former section, sampled along a circle of radius $\sim0.5\,a_0$ centered on the nucleus, with the second electron fixed at $\varphi$ close to zero. For the sake of clarity, the amplitudes were obtained with a Gauss–Legendre angular grid, whose separable $\theta,\varphi$ structure makes it straightforward to scan a 1-D cut along $\varphi$, albeit less optimal than Lebedev angular grid (as highlighted in the appendix in Fig. \ref{fig:e_cv}). In the non-transcorrelated case, the curve shows the familiar sharp cusp anticipated by the electron–electron coalescence condition. As the Jastrow parameter $\mu_{ee}$ decreases, the TC transformation progressively smooths this cusp, exactly as predicted by theory and reproducing the results of~\cite{Giner_2021}. Choosing a Jastrow parameter that is too small can introduce anti‐correlation effects (as evidenced in~\cite{Giner_2021}), manifested as a maximal electron density at coalescence. Although the transcorrelated Hamiltonian remains isospectral, it is usually preferred to use parameters that preserve a physically meaningful wavefunction. 

\begin{figure}[h]
  \centering
  \includegraphics[width=\linewidth]{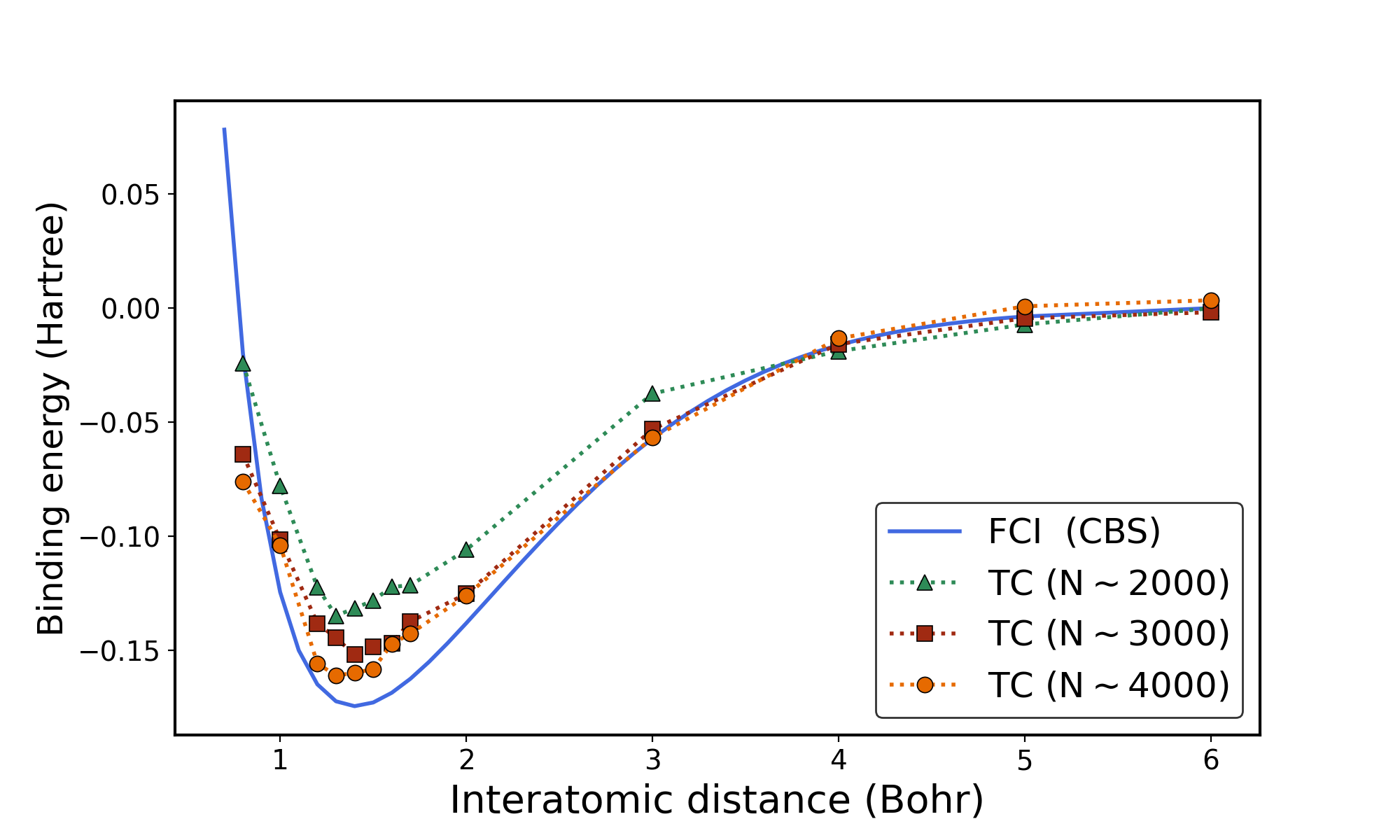}
\caption{Dissociation curve of H$_2$: binding energy versus internuclear distance for our transcorrelated (TC) method on nonuniform grids, compared to FCI at the CBS limit. Grids studied contain 2000, 3000, and 4000 total points (20, 30, and 40 radial points combined with 50 angular quadrature points per atom), corresponding to Hamiltonian matrices of dimensions $4$, $9$, and $16$ millions, respectively.}
  \label{fig:diss}
\end{figure}
In Figure \ref{fig:diss}, we report the H$_2$ potential energy curve of the H$_2$ molecule as a function of the interatomic distance $R$, \textit{i.e.} $E(\text{H}_2,R) - 2E(\text{H}))$. and the results are obtained using the multi‐center adaptive grids introduced in Ref.~\cite{sukumar2003voronoi}. A grid of roughly 2 000 points, while capturing the qualitative shape, remains far from the exact binding energy; increasing the grid density systematically incorporates more electronic correlation and brings the curve closer to the FCI–CBS reference. 
An important aspect of the present simulations is that the energies are size‐consistent, as evidenced by the dissociation limit correctly approaching zero as the distance $R$ increases. While increasing the radial resolution systematically gives better results at a given geometry, the dissociation curves lack some perfectly smooth behavior across the stretching of the bond. We believe the observed kinks occur because a slight change in bond length can completely reshuffle the Voronoi cells associated to our non-overlapping multicenter grid. This abrupt reorganization sometimes creates locally favorable resolution and sometimes less favorable, directly affecting the computed energy. Further basis‐convergence simulations are presented in Appendix \ref{fursim}. 

\section{Discussion and outlook}

Real‐space methods that explicitly track $\eta$ electrons on an $N$-point grid remain rare in standard computational chemistry because they naively require an $N^\eta$-dimensional Hilbert space. Using quantum devices, however, this Hilbert space can be encoded using $\eta \log N$ qubits, making fine spatial discretizations (large $N$) regimes attractive. For example, our H$_2$ simulation involved diagonalizing matrices as large as 16 million $\times$ 16 million, while the equivalent quantum encoding would use only 12 qubits per electron (4 096 grid points), and every additional pair of qubits doubles the spatial resolution. Since high-accuracy DFT grids rarely exceed $10^5$ points per atom, a ten-atom calculation would only require on the order of 20 qubits per electron. This provides an intuitive picture of how compactly the wavefunction can be stored while approaching the continuum limit.
Prior such approaches within the Quantum Phase Estimation algorithm have used uniform grids or plane waves to keep Hamiltonian complexity low, but these schemes will likely fail to resolve the cusps in electronic wavefunctions that arise from Coulomb singularities, which are often approximated as bounded trading physical relevance for convenience. In this paper, we integrate nonuniform, molecular adaptive grids into a first‐quantized, real‐space quantum computing framework for electronic structure calculations. Further in this direction, we introduce the use of a transcorrelated Hamiltonian, that eliminates Coulombic infinities and corresponding cusps in the eigenvectors. Numerical validation on He and H$_2$ offers a realistic benchmark on atomic and molecular systmes, featuring electron–nuclear and electron–electron cusps, as well as both dynamic and static correlation regimes, and the successful H$_2$ dissociation suggests that the present real-space, transcorrelated framework could handle other more complex chemical systems. 
Although detailed scaling analyses and quantitative comparisons to other methods are left for future work, this study establishes a flexible, first‐step approach that can be refined and extended in multiple directions. 

In the present work, we used a rather minimal set-up both in terms of the definition of the Voronoi cells and the correlation factor, which leaves a substantial room for improvement that we now outline.
The Voronoi finite‐volume discretization allows to express differential operators in the basis on \textit{any} input grid, producing a Hamiltonian that is sparse yet has generally unstructured off‐diagonal elements. Any such operators can be block encoded via Pauli-LCU expansion, but the brute force classical preprocessing to find the coefficients scales with the nonzero elements in the Hamiltonian, thus polynomially in the basis size. Since the introduced molecular grids are typically organized as multicenter shells, they exhibit some regularity. At the same time, efficiently loading structured data onto quantum hardware, whether for state preparation or unitary synthesis, remains an active research area, with a growing diversity of data structures now amenable to efficient encoding~\cite{sunderhauf2024block, zylberman2025efficient, Berry_2019}. Exploiting these grid symmetries alongside tailored data‐loading schemes will be crucial for scaling this approach. 
Alternatively, a fully quantum workflow, with no classical preprocessing, could take as input the grid parameters (nuclear coordinates, radial and angular point counts, and related settings) and, via quantum arithmetics~\cite{nielsen00}, construct the Voronoi diagram, assemble the Hamiltonian matrix elements, and then compute the Pauli‐LCU coefficients from Eq.~\ref{LCUB} using basic quantum gates (e.g., SWAP, Hadamard) as outlined in~\cite{Georges_2025}. Such an approach will demand a careful resource analysis.
Classical computational chemistry has long optimized DFT integration grids and transcorrelated Hamiltonian forms, with Jastrow parameter selection being its own discipline. Likewise, optimizing grids and Jastrow choices on our numerical scheme should improve these baseline simulations.

%By removing all cusps, the transcorrelated Hamiltonian imposes the wavefunction smoothness assumptions that are typically ad hoc~\cite{kato1957eigenfunctions} in error and complexity analyses for standard Hamiltonians~\cite{childs2022quantum,an2021time,babbush2023quantum} \manu{pourquoi adhoc ?}. Consequently, and since plane‐wave and uniform grids are amenable to rigorous asymptotic proofs, formulating the TC Hamiltonian on these bases may unlock new, tight complexity bounds.
Transcorrelated Hamiltonians have previously been proposed on quantum computers through imaginary time evolution using Gaussian basis sets~\cite{mcardle2020improving, dobrautz2024toward, sokolov2022orders}. Since transcorrelated Hamiltonians are inherently derived in real space, we believe that our real-space encoding offers a more natural representation. Coupled with the Quantum Eigenvalue Estimation (QEVE) algorithm that offers deterministic eigenvalue extraction, we anticipate this scheme to become a promising pathway to optimally exploit the transcorrelated method on quantum computers. Together with the efficient real-space sampling with molecule‑adaptive grids, this framework establishes a robust and flexible foundation for achieving complete‑basis‑set‑limit accuracy in ground‑state quantum chemistry on future quantum hardware.
\\[0.4cm]
\section*{Acknowledgments}
This work has received funding from the European Research Council (ERC) under the European Union's Horizon 2020 research and innovation program (grant agreement No 810367), project EMC2 (JPP). Support from the PEPR EPIQ - Quantum Software (ANR-22-PETQ-0007, JPP) and HQI (JPP) programs is acknowledged.
The authors wish to thank Yvon Maday, Igor Chollet, Pierre Monmarché and Solal Perrin-Roussel for fruitful discussions.
\section*{Competing Interests}
JPP is shareholder and co-founder of Qubit Pharmaceuticals. The remaining authors declare no other competing interests. 

\bibliographystyle{unsrt} 
\bibliography{refs}

\newpage
\onecolumngrid

\appendix

\section{Expression of Differential Operators in the Voronoi Finite Volume Scheme} \label{VFV}

Consider the molecular Schrödinger equation in \eqref{ps1} and \eqref{ps2}:
\begin{equation}
  \sum_{i=0}^{\eta-1} \Big{(} -\frac{1}{2}\nabla^{2}_{\mathbf{x}_i} \Psi(\ldots,\mathbf{x}_i,\ldots)
  - \sum_{\alpha=0}^{M-1} 
  \frac{\mathcal{Z}_\alpha}{|\mathbf{x}_i - \mathbf{R}_\alpha|} \Psi(\ldots,\mathbf{x}_i,\ldots)
  + \sum_{\substack{j>i}}^{\eta-1} 
  \frac{1}{|\mathbf{x}_i - \mathbf{x}_j|} \Psi(\ldots,\mathbf{x}_i,\ldots) \Big{)}
  = E\Psi(\ldots,\mathbf{x}_i,\ldots) \label{ps3}
\end{equation}
and suppose the entire space of interest is filled with arbitrarly distributed points $\{\mathbf{r}_m\}_{m=0}^{N-1}$, forming a Voronoi diagram in $\mathbb{R}^{3}$. We take the volume integral over the Voronoi cells $\{\text{Vor}(\mathbf{r}_{m_i})\}_{i=0}^{\eta-1}$ occupied by each electron on both sides:
\begin{equation}
\begin{split}
  &\sum_{i=0}^{\eta-1} \Big{(} -\frac{1}{2} \int_{\text{Vor}(\mathbf{r}_{m_0})}\ldots \int_{\text{Vor}(\mathbf{r}_{m_i})}\ldots \int_{\text{Vor}(\mathbf{r}_{m_{\eta-1}})}\nabla^{2}_{\mathbf{x}_i} \Psi(\mathbf{x}_0,\ldots,\mathbf{x}_i,\ldots,\mathbf{x}_{\eta-1})\,d\mathbf{x}_0 \ldots d\mathbf{x}_{i}\ldots d\mathbf{x}_{\eta-1}\\
  & -\sum_{\alpha=0}^{M-1} \int_{\text{Vor}(\mathbf{r}_{m_0})}\ldots \int_{\text{Vor}(\mathbf{r}_{m_i})}\ldots \int_{\text{Vor}(\mathbf{r}_{m_{\eta-1}})}
  \frac{\mathcal{Z}_\alpha}{|\mathbf{x}_i - \mathbf{R}_\alpha|} \Psi(\mathbf{x}_0,\ldots,\mathbf{x}_i,\ldots,\mathbf{x}_{\eta-1})\,d\mathbf{x}_0 \ldots 
  d\mathbf{x}_{i}\ldots d\mathbf{x}_{\eta-1}\\
  &+ \sum_{\substack{j>i}}^{\eta-1} \int_{\text{Vor}(\mathbf{r}_{m_0})}\ldots \int_{\text{Vor}(\mathbf{r}_{m_i})}\ldots \int_{\text{Vor}(\mathbf{r}_{m_{\eta-1}})}
  \frac{1}{|\mathbf{x}_i - \mathbf{x}_j|}\Psi(\mathbf{x}_0,\ldots,\mathbf{x}_i,\ldots,\mathbf{x}_{\eta-1}) \,d\mathbf{x}_0 \ldots d\mathbf{x}_{i}\ldots d\mathbf{x}_{\eta-1} \Big{)}\\
  &= E\int_{\text{Vor}(\mathbf{r}_{m_0})}\ldots \int_{\text{Vor}(\mathbf{r}_{m_i})}\ldots \int_{\text{Vor}(\mathbf{r}_{m_{\eta-1}})}\Psi(\mathbf{x}_0,\ldots,\mathbf{x}_i,\ldots,\mathbf{x}_{\eta-1})\,d\mathbf{x}_0 \ldots d\mathbf{x}_{i}\ldots d\mathbf{x}_{\eta-1}
\end{split} \label{scheme1}
\end{equation}
According to the divergence theorem applied to a gradient field, we can express the main integral of the first term inside the sum as
\begin{equation}
\begin{split}
  \int_{\text{Vor}(\mathbf{r}_{m_i})}\nabla^{2}_{\mathbf{x}_i} \Psi(\ldots,\mathbf{x}_i,\ldots)\,d\mathbf{x}_i = \int_{\partial\text{Vor}(\mathbf{r}_{m_i})} \nabla_{\mathbf{x}_i}\Psi(\ldots,\mathbf{x}_i,\ldots)\cdot d \mathbf{s}_{m_i}
  = \sum_{n_{i}\in\Lambda(m_{i})} \int_{\Gamma_{m_{i}n_{i}}} \nabla_{\mathbf{x}_i}\Psi(\ldots,\mathbf{x}_i,\ldots)\cdot \mathbf{\hat{r}}_{m_{i}n_{i}} \,d \mathbf{\sigma}_{m_{i}n_{i}} \label{mainint}
\end{split}
\end{equation}
where $d \mathbf{s}_{m_i}$ is the unit normal area vector of the boundary $\partial \text{Vor}(\mathbf{r}_{m_i})$ of the Voronoi cell $\text{Vor}(\mathbf{r}_{m_i})$. In the last equality, we decomposed this boundary as the union of the Voronoi facets $\Gamma_{m_{i}n_{i}}$ with all surrounding neighbors to $\text{Vor}(\mathbf{r}_{m_i})$, \textit{ie} $\partial \text{Vor}(\mathbf{r}_{m_i})=\bigcup_{{n_i}\in\Lambda(m_{i})}\Gamma_{m_{i}n_{i}}$. Note that the unit normal vector $\mathbf{\hat{r}}_{m_{i}n_{i}} = (\mathbf{r}_{n_i} - \mathbf{r}_{m_i})/|\mathbf{r}_{n_i} - \mathbf{r}_{m_i}|$ is a normal vector of the facet $\Gamma_{m_{i}n_{i}}$. Now, we divide equation \eqref{scheme1} by the product of the volumes of the Voronoi cells $\{v_{m_i}\}_{i=1}^{\eta}$, and take the limit when each one goes to zero. This consists in evaluating the integrands at the discrete points, multiplied by the integration measure and divided by the product of volumes. For one of the terms in the sum of equation \eqref{mainint}, this gives
\begin{equation}
  \lim _{v_{m_i}\rightarrow0}\frac{1}{v_{m_i}}\int_{\Gamma_{m_{i}n_{i}}} \nabla_{\mathbf{x}_i}\Psi(\mathbf{x}_1,\ldots,\mathbf{x}_i,\ldots,\mathbf{x}_\eta)\cdot \mathbf{\hat{r}}_{m_{i}n_{i}} \,d {\sigma}_{m_{i}n_{i}}=\frac{1}{v_{m_i}}\nabla_{\mathbf{r}_{m_i}}\Psi(\mathbf{x}_1,\ldots,\mathbf{x}_{i-1},\mathbf{r}_{m_i},\mathbf{x}_{i+1}.\ldots,\mathbf{x}_\eta)\cdot \mathbf{\hat{r}}_{m_{i}n_{i}}{\sigma}_{m_{i}n_{i}} \label{lim}
\end{equation}
Proceeding similarly with all other terms, one gets
\begin{equation}
\begin{split}
  &\sum_{i=0}^{\eta-1} \Big{(} -\frac{1}{2v_{m_i}}\sum_{n_{i}\in\Lambda(m_{i})} \nabla_{\mathbf{r}_{m_i}}\Psi(\mathbf{r}_{m_0},\ldots,\mathbf{r}_{m_i},\ldots,\mathbf{r}_{m_{\eta-1}})\cdot \mathbf{\hat{r}}_{m_{i}n_{i}}\,\mathbf{\sigma}_{m_{i}n_{i}}
  - \sum_{\alpha=0}^{M-1} 
  \frac{\mathcal{Z}_\alpha}{|\mathbf{r}_{m_i} - \mathbf{R}_\alpha|} \Psi(\mathbf{r}_{m_0},\ldots,\mathbf{r}_{m_i},\ldots,\mathbf{r}_{m_{\eta-1}})\\
  &+ \sum_{\substack{j>i}}^{\eta-1} 
  \frac{1}{|\mathbf{r}_{m_i} - \mathbf{r}_{m_j}|} \Psi(\mathbf{r}_{m_0},\ldots,\mathbf{r}_{m_i},\ldots,\mathbf{r}_{m_{\eta-1}}) \Big{)}
  = E\Psi(\mathbf{r}_{m_0},\ldots,\mathbf{r}_{m_i},\ldots,\mathbf{r}_{m_{\eta-1}}) \label{discrete}
\end{split}
\end{equation}
The continuous problem in \eqref{ps3} has thus been reduced to a discretized scheme on the Voronoi diagram. We also approximate the directional derivative with a finite difference, whose error can easily be identified using Taylor's theorem:
\begin{equation}
  \nabla_{\mathbf{r}_{m_i}}\Psi(\mathbf{r}_{m_0},\ldots,\mathbf{r}_{m_i},\ldots,\mathbf{r}_{m_{\eta-1}})\cdot \mathbf{\hat{r}}_{m_{i}n_{i}} = \frac{\Psi(\mathbf{r}_{m_0},\ldots,\mathbf{r}_{n_i},\ldots,\mathbf{r}_{m_{\eta-1}})-\Psi(\mathbf{r}_{m_0},\ldots,\mathbf{r}_{m_i},\ldots,\mathbf{r}_{m_{\eta-1}})}{|\mathbf{r}_{n_i} - \mathbf{r}_{m_i}|} + \mathcal{O}(|\mathbf{r}_{n_i} - \mathbf{r}_{m_i}|)  \label{directional}
\end{equation}
Combining equations \eqref{discrete} and \eqref{directional}, we find the expression of the discretized laplacian operator in \eqref{laplacian} and that of the scalar potential energy operators.
For the gradient operator, similarly to what done in \eqref{mainint} and \eqref{lim}, and using an alternative form of the divergence theorem:
\begin{equation}
\begin{split}
   &\lim _{v_{m_i}\rightarrow0}\frac{1}{v_{m_i}} \int_{\text{Vor}(\mathbf{r}_{m_i})} \nabla_{\mathbf{x}_i}\Psi(\ldots,\mathbf{x}_i,\ldots) \,d\mathbf{x}_i = \lim _{v_{m_i}\rightarrow0}\frac{1}{v_{m_i}} \int_{\partial\text{Vor}(\mathbf{r}_{m_i})}\Psi(\ldots,\mathbf{x}_i,\ldots)\big|_{\partial\text{Vor}(\mathbf{r}_{m_i})}\,d\mathbf{s}_{m_i} \\
   &=\sum_{n_{i}\in\Lambda(m_{i})} \lim _{v_{m_i}\rightarrow0}\frac{1}{v_{m_i}} \int_{\Gamma_{m_{i}n_{i}}} \Psi(\ldots,\mathbf{x}_i,\ldots)\big|_{\Gamma_{m_{i}n_{i}}} \mathbf{\hat{r}}_{m_{i}n_{i}} \,d\mathbf{\sigma}_{m_{i}n_{i}}\\
   &=\sum_{n_{i}\in\Lambda(m_{i})}\frac{1}{v_{m_i}} \Psi(\mathbf{r}_{m_1},\ldots,\mathbf{r}_{m_i},\ldots,\mathbf{r}_{m_\eta})\big|_{\Gamma_{m_{i}n_{i}}}\mathbf{\hat{r}}_{m_{i}n_{i}}\mathbf{\sigma}_{m_{i}n_{i}}
\end{split}
\end{equation}
We approximate the value of the wavefunction on the facet $\Gamma_{m_{i}n_{i}}$ by the average value
\begin{equation}
\Psi(\mathbf{r}_{m_0},\ldots,\mathbf{r}_{m_i},\ldots,\mathbf{r}_{m_{\eta-1}})\big|_{\Gamma_{m_{i}n_{i}}}\approx\frac{\Psi(\mathbf{r}_{m_0},\ldots,\mathbf{r}_{m_i},\ldots,\mathbf{r}_{m_{\eta-1}})+\Psi(\mathbf{r}_{m_0},\ldots,\mathbf{r}_{n_i},\ldots,\mathbf{r}_{m_{\eta-1}})}{2}
\end{equation}
Knowing that $\sum_{n_i}\mathbf{\hat{r}}_{m_{i}n_{i}}\sigma_{m_{i}n_{i}}=0$ for any convex polygon, the matrix form of the gradient operator projected on a unit vector $\hat{\mathbf{z}}$, \textit{ie} the matrix representation of the directional derivative along $\hat{\mathbf{z}}$, is given by:
\begin{equation}
\mathbf{D}^{(\hat{\mathbf{z}})}_{mn} =
\begin{cases}
\displaystyle \frac{\sigma_{mn}}{2 v_m} \frac{\mathbf{{r}}_{m}-\mathbf{r}_{n}}{|\mathbf{{r}}_{m}-\mathbf{r}_{n}|}\cdot \hat{\mathbf{z}} & \text{if } n \in \Lambda(m) \\[1ex]
0 & \text{otherwise} 
\end{cases} \label{grad}
\end{equation}

\section{Initial state}
In second‐quantized encoding, qubits directly represent orbital occupations, making the Hartree–Fock (HF) determinant trivial to prepare. When stronger ground-state support is needed, various techniques exist to load correlated CI states within the second‐quantized formalism~\cite{feniou2024sparse, tubman2018postponing, fomichev2023initial}.
First‐quantized, real‐space encoding requires classically computed molecular orbitals to be evaluated over the chosen discretisation, and the corresponding amplitudes loaded to the computational basis of the qubit register. Then, a fermionic antisymmetrization circuit of size \(\mathcal O(\eta\log\eta\log N)\) and depth \(\mathcal O(\log\eta\log\log N)\) is applied to the single-electron registers. The key task is efficient loading of Gaussian orbital amplitudes onto the computational basis, a task that can be addressed through careful grid points / Voronoi cell ordering and advanced state loading routines~\cite{motlagh2024generalizedquantumsignalprocessing,zylberman2024efficient,holmes2020efficientquantumcircuitsaccurate}. Likewise, CI wavefunctions can be prepared by first loading each Slater determinant with the above procedure and then assembling the CI state using the standard LCU approach with \texttt{PREPARE} and \texttt{SELECT} oracles. 
\label{initial_state}

\section{Further Basis Convergence Simulations} \label{fursim}

Figure~\ref{fig:h_conv} compares the convergence of the hydrogen ground-state energy for three Becke–Lebedev radial exponents \(\nu=1,2,3\). Panel (a) shows the non-transcorrelated Hamiltonian, while (b) and (c) use TC Hamiltonians with \(\mu_{ne}=3\) and \(\mu_{ne}=1\), respectively. In each plot the error (in Hartree) is plotted on a logarithmic scale versus the total number of grid points \(N\). The blue band indicates chemical accuracy (\(\le 1~\text{mHa}\)) to the FCI-CBS. Introducing the Jastrow factor accelerates convergence, and reducing \(\mu_{ne}\) (\(\mu_{ne}=1\)) achieves the chemical-accuracy threshold several thousand points sooner than the non-TC case. Across all settings the grid with \(\nu=1\) consistently outperforms higher exponents. 

\begin{figure*}[h]
 \centering
 %--- non-TC --------------------------------------------------------------
 \subcaptionbox{Non-TC\label{fig:h_conv_ntc}}
  [0.31\linewidth]{\includegraphics[width=\linewidth]{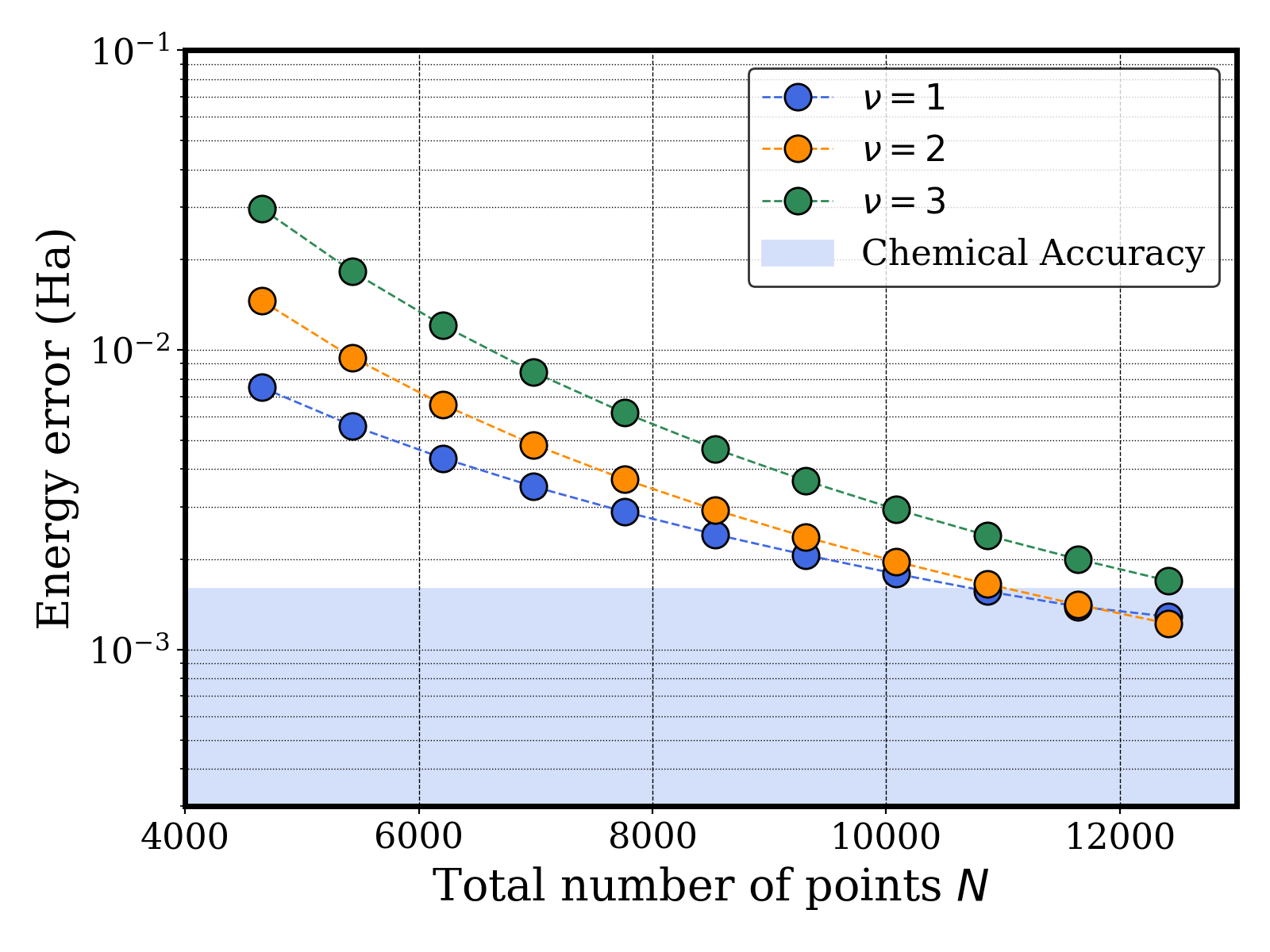}}
 %--- TC, mu=1 ------------------------------------------------------------
 \subcaptionbox{$\mu_{ne}=3$\label{fig:h_conv_mu3}}
  [0.31\linewidth]{\includegraphics[width=\linewidth]{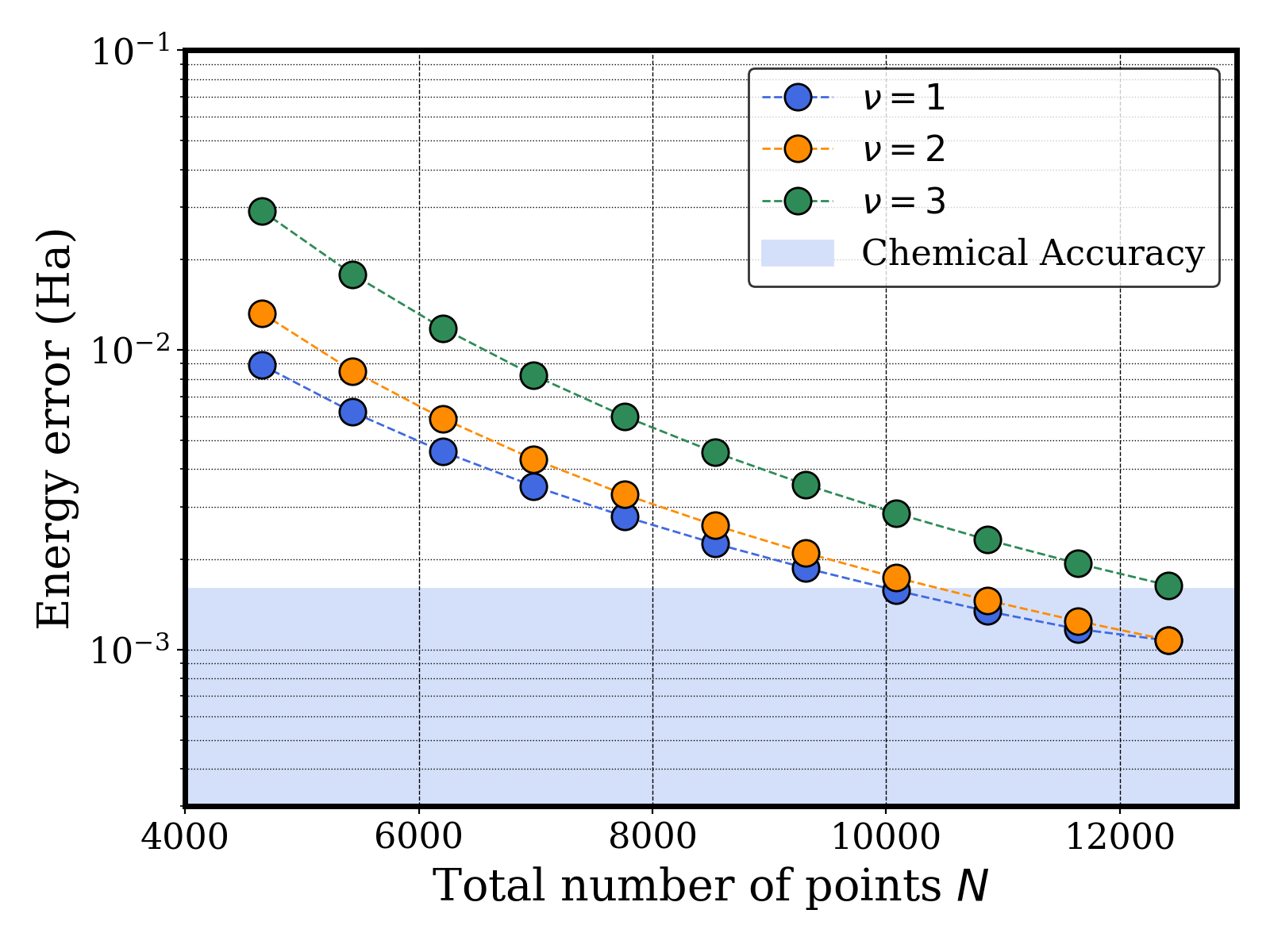}}
 %--- TC, mu=3 ------------------------------------------------------------
 \subcaptionbox{$\mu_{ne}=1$\label{fig:h_conv_mu1}}
  [0.31\linewidth]{\includegraphics[width=\linewidth]{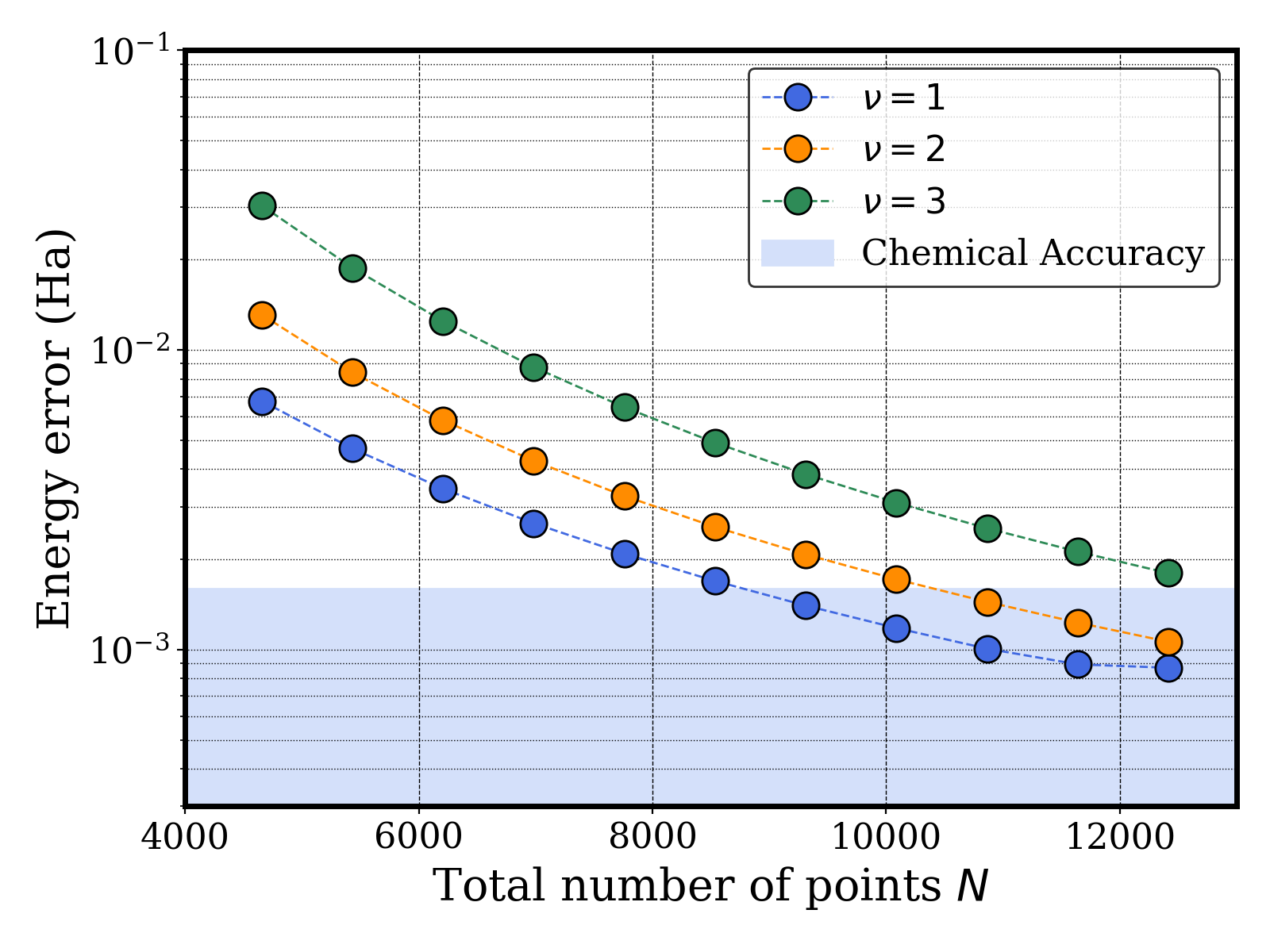}}
 \caption{%
  Convergence of the hydrogen ground-state energy for Becke–Lebedev grids with radial exponents \(\nu=1\) (blue), \(\nu=2\) (orange), and \(\nu=3\) (green). The shaded band marks chemical accuracy (1 mHa). Panels compare the non-transcorrelated (a) and transcorrelated Hamiltonians with \(\mu_{ne}=3\) (b) and \(\mu_{ne}=1\) (c).}
 \label{fig:h_conv}
\end{figure*}

Figure \ref{fig:e_cv} compares the energy convergence of the transcorrelated helium ground state on Gauss–Legendre and Lebedev radial grids. The Lebedev grid several shells earlier than Gauss–Legendre at the same angular resolution, highlighting its better point distribution. As typical in grid‐based approaches, the computed energy approaches the full-CI complete-basis-set limit “from below,” with smaller grids underestimating the true energy and each refinement incrementally raising the value toward the FCI-CBS limit.

\begin{figure}[h]
 \centering
 \begin{minipage}{0.45\linewidth}
  \includegraphics[width=\linewidth]{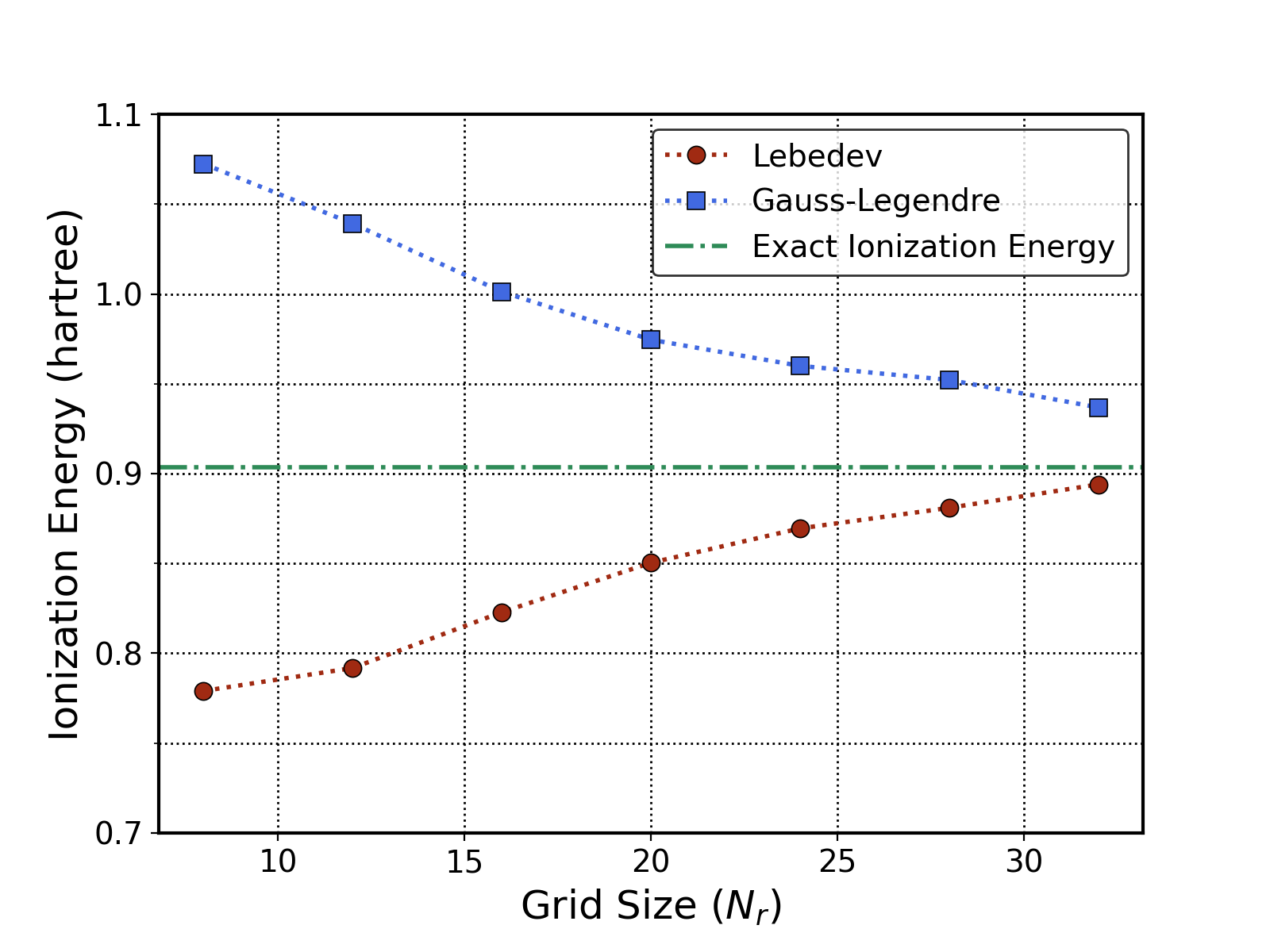}
 \end{minipage}%
 \hfill
 \begin{minipage}{0.49\linewidth}
  \caption{Convergence of the transcorrelated helium first ionization energy with radial grid size for Gauss–Legendre and Lebedev quadratures (128 angular points each, \(\mu_{ee}=2\), \(\mu_{ne}=1\)). The dashed line marks the complete‐basis‐set full‐CI energy of He.}
  \label{fig:e_cv}
 \end{minipage}
\end{figure}

\section{Block Encoding Procedure} \label{LCUB}
\begin{definition}[Linear Combination of Unitaries] \label{one-norm}
  Given the matrix representation $\mathbf{H}\in \mathbb{C}^{N \times N}$ of the Hamiltonian (or any other operator), which can be decomposed into a linear combination of unitaries (LCU), \textit{ie} for $L\le N^{2}$ and 
  \begin{equation}
    \mathbf{H}=\sum_{l=0}^{L-1}a_{l}\mathbf{U}_l
  \end{equation}
  where $a_l$ are some complex coefficients and $\mathbf{U}_l$ are $N \times N$ unitary operators, we define the one-norm of the LCU of $\mathbf{H}$ as
  \begin{equation}
  \lambda = \sum_{l=0}^{L-1}|a_{l}|  \label{LCUnorm}
\end{equation}
\end{definition}

\begin{definition}[Block Encoding of Square Matrices]
  We refer to $ \mathbf{U}_H$ as a $(\lambda, \kappa, \epsilon)$-block encoding of $ \mathbf{H} \in \mathbb{C}^{N \times N}$ with $ \lambda > 0 $ and $ \epsilon > 0 $ if we have
\begin{equation}
\left\| \frac{\mathbf{H}}{\lambda} - \left( \langle 0|^{\otimes \kappa} \otimes \mathbf{I}_N \right) \mathbf{U}_H \left( |0\rangle^{\otimes \kappa} \otimes \mathbf{I}_N \right) \right\|_2 \leq \epsilon
\end{equation}
\end{definition}$\mathbf{I}_N$ is the $N$ by $N$ identity matrix. Sandwiching $\mathbf{U}_{H} \in \mathbb{C}^{2^{\kappa}N\ \times \ 2^\kappa N}$ between $|0\rangle^{\otimes \kappa} \otimes \mathbf{I}_N$ and its bra counterpart is just a mathematical way of saying that we extract the upper left $N$ by $N$ block of $\mathbf{U}_{H}$. When the block-encoding is exact, \textit{ie} $\epsilon=0$, $\mathbf{U}_H$ becomes a $(\lambda, \kappa)$ block encoding and it takes the form
\begin{equation}
  \mathbf{U}_H = \begin{pmatrix} \mathbf{H}/\lambda & * \\ * & * \end{pmatrix} \label{block}
\end{equation}
where the blocks denoted by '$*$' do not contain useful information but ensure unitarity. A block encoding $\mathbf{U}_H$ of $\mathbf{H}$ can be obtained by combining the operators \texttt{PREP} (preparation) and \texttt{SELECT} operations defined as~\cite{low2019hamiltonian}:
\begin{equation}
\texttt{PREP} \ket{0}^{\otimes \log L} = \sum_{l=0}^{L-1} \sqrt{\frac{a_l}{\lambda}} \ket{l}, \space\texttt{SELECT} = \sum_{l=0}^{L-1} \ket{l} \bra{l} \otimes \mathbf{U}_l
\end{equation}
and performing fixed-point oblivious amplification on the $\ket{0}^{\otimes \kappa}$ state~\cite{Berry_2018,Yan2022} after constructing the gate
\begin{equation}
  (\texttt{PREP}^\dagger \otimes \mathbf{I}_N) \cdot \texttt{SELECT} \cdot (\texttt{PREP}\otimes \mathbf{I}_N)
\end{equation}Block-encoding then enables the construction of a unitary operator 
\begin{equation}
\mathbf{Q} = \mathbf{U}_{H}\begin{pmatrix} \mathbf{I}_{N\times N} & \mathbf{0}_{N\times(2^{\kappa}-N)}\\ \mathbf{0}_{(2^{\kappa}-N)\times N} & -\mathbf{I}_{(2^{\kappa}-N)\times(2^{\kappa}-N)} \end{pmatrix}
\end{equation} called the \textit{qubitized walk-operator}~\cite{sünderhauf2023generalizedquantumsingularvalue}. When the block-encoding is exact, $\mathbf{Q}$ has eigenvalues equal to $e^{\pm i\cos^{-1}(E_k/\lambda)}$, where $E_k$ are the eigenvalues of $\mathbf{H}$.

\section{Quantum Chebyshev Phase Estimation} \label{QCPE}
In this section, we summarize the method proposed in~\cite{Low_2024} to estimate the eigenvalues of our non-Hermitian Hamiltonian. We recall that this loss of hermiciticty is due to the additional non-Hermitian differential operators \eqref{nh1} and \eqref{nh2} that arise from the transcorrelation transformation and our inability to perform a symmetrization procedure similar to what was done in section \ref{finitevolumescheme}. This quantum algorithm is only applicable to square matrices with real eigenvalues, which we conjecture is the case for our Hamiltonian due to numerical experiments with exact diagonalization. The method can be extended to eigenvalues that are contained in an area of the complex plane, but for simplicity we will only consider square matrices with real eigenvalues in the interval $[-1;1]$.
\subsection{Preliminaries on Chebyshev Polynomials}

\begin{definition}[Chebyshev Polynomials of the First and Second Kinds~\cite{garfken67:math}] \hfill
  \begin{itemize}
    \item The $\ell$th Chebyshev polynomial of the first kind $T_{j}:[-1;1]\rightarrow[-1;1]$ is defined as: \begin{equation}
      T_{\ell}(x)=\cos\big{(}\ell\arccos(x)\big{)} \label{cheb1}
    \end{equation}
    with generating function
    \begin{equation}
    \sum_{\ell=0}^{\infty} T_{\ell}(x)y^{\ell}=\frac{1-yx}{1+y^2-2yx} \label{gen1}
    \end{equation}
    It is also convenient to define the rescaled Chebyshev polynomials of the first kind:
    \begin{equation}
    \tilde{T}_{\ell}(x)=
      \begin{cases}
    \frac{1}{2}T_{0}(x) & \text{if }\, \ell=0\\
    T_\ell(x) & \text{if }\, \ell\ge 1 \label{cheb1r}
      \end{cases}
    \end{equation}
    with generating function
    \begin{equation}
    \sum_{\ell=0}^{\infty} \tilde{T}_{\ell}(x)y^{\ell}=\frac{1-yx}{2(1+y^2-2yx)} \label{gen}
    \end{equation}

    \item The $\ell$th Chebyshev polynomial of the second kind $U_{\ell}:[-1;1]\rightarrow[-1;1]$ is defined as:\begin{equation}
      U_{\ell}(x)=\frac{\sin\big{(}(\ell+1)\arccos(x)\big{)}}{\sin\big{(}\arccos(x)\big{)}} \label{cheb2}
    \end{equation}
    with generating function
    \begin{equation}
    \sum_{\ell=0}^{\infty} U_{\ell}(x)y^{\ell}=\frac{1}{1+y^2-2yx} \label{gen2}
    \end{equation}
  
  \end{itemize}
\end{definition}

\subsection{The Algorithm}

\begin{definition}[Chebyshev History State] \label{CHEBY}
  Let $\mathbf{H}$ be a square matrix with real eigenvalues and subnormalization constant $\alpha_{H} > \lVert \mathbf{H} \rVert _{2}$. Consider an $\upsilon$ qubit register and an arbitrary state $\ket{\psi}$. The total system is said to form a Chebyshev history state of $\mathbf{H}$ if it can be written as \begin{equation}
    \sum_{\ell=0}^{\upsilon-1}\ket{\ell} \otimes {T}_{\ell}\big{(}\frac{\mathbf{H}}{\alpha_{H}}\big{)}\ket{\psi} \label{CHS}
  \end{equation}
  up to a normalization constant, where $T_\ell$ is the $\ell$th Chebyshev polynomial of the first kind as defined in \eqref{cheb1}.
\end{definition}
The reason for the name \textit{history state} is the fact that the $\upsilon$ qubits register $\ket{\ell}$ serves as a counter that indicates the order of the Chebyshev polynomial of $\mathbf{H}/\alpha_H$ that is applied on the state $\ket{\psi}$ in the sum.
Suppose now that $\ket{\psi}$ is the eigenvector of $\mathbf{H}$ corresponding to its ground state, \textit{ie} $\ket{\psi}=\ket{\psi_0}$ such that $\mathbf{H}\ket{\psi_0}=E_{0}\ket{\psi_0}$. Then, the history state takes the form
\begin{equation}
  {\sum_{\ell=0}^{\upsilon-1} \ket{\ell} \otimes {T}_{\ell}\big{(}\frac{\mathbf{H}}{\alpha_{H}}\big{)}\ket{\psi_0}}= \sum_{\ell=0}^{\upsilon-1} {T}_{\ell}\big{(}\frac{E_0}{\alpha_{H}}\big{)} \ket{\ell} \otimes \ket{\psi_0} = \sum_{\ell=0}^{\upsilon-1} \cos{(2\pi \ell \phi)} \ket{\ell} \otimes \ket{\psi_0} 
\end{equation}
where $\phi = \frac{1}{2\pi}\arccos{\frac{E_0}{\alpha_H}}$. We now neglect the $\ket{\psi_{0}}$ register, whose purpose was to phase-kickback the cosine term to the $\upsilon$ qubit register, and apply a quantum Fourier transform:
\begin{equation}
  QFT \sum_{\ell=0}^{\upsilon-1} \cos{(2\pi \ell \phi)} \ket{\ell} = \frac{1}{2\sqrt{\upsilon}} \sum_{\ell=0}^{\upsilon-1} \sum_{\ell'=0}^{\upsilon-1} \Big{(} e^{2\pi i\ell'(\phi-\frac{\ell}{\upsilon})} + e^{-2\pi i\ell'(\phi+\frac{\ell}{\upsilon})}   \Big{)} \ket{\ell} 
\end{equation}
If we scale $\alpha_H> 2||\mathbf{H}||_{2}$, the authors show that, by measuring $\ell$ in the computational basis, we obtain an approximate of $\upsilon\phi$ with a success probability greater than $1/2$. More precisely, starting from a Chebyshev history state \eqref{CHS}, performing a QFT, and measuring in the computational basis gives a value $\ell$ satisfying
\begin{equation}
  |cmod_{1}(\frac{\ell}{\upsilon} \pm \phi)|<\frac{\upsilon'}{\upsilon}
\end{equation}
with probability at least $0.566$ for $\upsilon' \ge 5$. Here, $cmod$ refers to the centered modulus function defined as
\begin{equation}
  cmod_{q}(x)=x-q\lfloor \frac{x+\frac{q}{2}}{q} \rfloor
\end{equation}
The success probability can then be boosted to at least $1-p_f$ by repeating the procedure $\mathcal{O}(\log(1/p_f))$ times. However, one must first be able to generate Chebyshev history states.

\begin{lemma}
  Let $\mathbf{H}$ be a square matrix with real eigenvalues and subnormalization constant $\alpha_{H} > ||\mathbf{H}||_{2}$ and let $\mathbf{L}_\upsilon=\sum_{\ell=0}^{\upsilon-2}\ket{\ell+1}\bra{\ell}$ be the $\upsilon \times \upsilon$ lower shift operator. The matrix version of the operator of the generating function for rescaled Chebyshev polynomials \eqref{gen} \begin{equation}
    G\big{(}\frac{\mathbf{H}}{\alpha_H}\big{)} = \sum_{\ell=0}^{\upsilon-1}\mathbf{L}_\upsilon^\ell \otimes \tilde{T}_{\ell}\big{(}\frac{\mathbf{H}}{\alpha_H}\big{)}
  \end{equation}
  applied on the $\ket{0}\otimes \ket{\psi}$ state gives the rescaled Chebyshev history state
  \begin{equation}
    \sum_{\ell=1}^{\upsilon-1} \ket{\ell} \otimes \tilde{T}_{\ell}\big{(}\frac{\mathbf{H}}{\alpha_{H}}\big{)}\ket{\psi} \label{CHS}
  \end{equation} \label{RCHS}
\end{lemma}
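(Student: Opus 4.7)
The plan is to verify this lemma by direct computation: the statement is essentially an algebraic identity whose only non-trivial ingredient is the combinatorial action of the lower shift operator on the first register, combined with the well-definedness of polynomials of $\mathbf{H}/\alpha_H$ on arbitrary vectors.

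First, I would apply bilinearity of the tensor product to the definition of $G(\mathbf{H}/\alpha_H)$ and rewrite
\begin{equation}
G\!\left(\frac{\mathbf{H}}{\alpha_H}\right)\bigl(\ket{0}\otimes\ket{\psi}\bigr) \;=\; \sum_{\ell=0}^{\upsilon-1}\bigl(\mathbf{L}_\upsilon^{\ell}\ket{0}\bigr)\otimes \tilde{T}_\ell\!\left(\frac{\mathbf{H}}{\alpha_H}\right)\ket{\psi}.
\end{equation}
The polynomial $\tilde{T}_\ell(\mathbf{H}/\alpha_H)$ is well-defined on $\ket{\psi}$ as a finite power series in $\mathbf{H}/\alpha_H$; the hypothesis that $\mathbf{H}$ has real spectrum together with $\alpha_H>\|\mathbf{H}\|_2$ places the eigenvalues of $\mathbf{H}/\alpha_H$ in $[-1,1]$, where the rescaled Chebyshev polynomials of the first kind \eqref{cheb1r} act as their usual analytic counterparts through the holomorphic functional calculus.

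Next, I would establish by a short induction on $\ell$ that $\mathbf{L}_\upsilon^{\ell}\ket{0}=\ket{\ell}$ for $0\le \ell\le \upsilon-1$. The base case is trivial, and the induction step follows directly from $\mathbf{L}_\upsilon=\sum_{k=0}^{\upsilon-2}\ket{k+1}\!\bra{k}$, which raises each computational basis label by one and annihilates $\ket{\upsilon-1}$. Substituting this identity into the previous display yields exactly the announced rescaled Chebyshev history state, with the $\ell=0$ term contributing $\tfrac{1}{2}\ket{0}\otimes\ket{\psi}$ in accordance with the rescaling convention $\tilde{T}_0=\tfrac{1}{2}T_0$.

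I do not expect a genuine analytic obstacle: once the holomorphic-calculus interpretation of $\tilde{T}_\ell(\mathbf{H}/\alpha_H)$ is granted (as is already implicit in the statement of Theorem~\ref{main} for non-Hermitian $\mathbf{H}$), the lemma reduces to index bookkeeping. The only point requiring care is ensuring that the upper summation bound $\upsilon-1$ in the definition of $G$ matches the range on which $\mathbf{L}_\upsilon^\ell\ket{0}$ remains nonzero, so that the truncated sum coincides with the full Chebyshev history state and no boundary term is lost or spuriously introduced.
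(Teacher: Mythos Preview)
Your proof is correct and follows essentially the same approach as the paper: both reduce the lemma to the observation that $\mathbf{L}_\upsilon^{\ell}\ket{0}=\ket{\ell}$ for $0\le\ell\le\upsilon-1$ and then apply $G$ term by term. The paper states the slightly more general identity $\mathbf{L}_\upsilon^{\ell}=\sum_{\ell'=0}^{\upsilon-1-\ell}\ket{\ell+\ell'}\bra{\ell'}$ rather than arguing by induction on $\ket{0}$, but this is the same computation; your remarks on the holomorphic functional calculus are unnecessary here since $\tilde{T}_\ell$ is a polynomial and hence well-defined on any matrix, but they do no harm.
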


\begin{proof}
  It is a straightforward application of the operator $G$, noticing that elevating the $ \upsilon \times \upsilon$ lower shift operator to the $\ell$th power gives \begin{equation}
    \mathbf{L}_\upsilon^{\ell}=\sum_{\ell'=0}^{\upsilon-1-\ell} \ket{\ell+\ell'}\bra{\ell'}
  \end{equation}
\end{proof}
In order to use the Chebyshev quantum phase estimation algorithm, we thus have to apply this operator to the zeroed $\upsilon$ qubit register. To implement it as a gate on a quantum computer, we first define a padding of the matrix $\mathbf{H}/\alpha_{H}$:
\begin{equation}
\begin{split}
  Pad(\frac{\mathbf{H}}{\alpha_H})&=\mathbf{I}_{\upsilon}\otimes\mathbf{I} + \mathbf{L}_{\upsilon}^2 \otimes \mathbf{I} -2\mathbf{L}_{\upsilon}\otimes \frac{\mathbf{H}}{\alpha_H}\\
  &=\begin{pmatrix}
   \mathbf{I} & 0 & \dots & \dots & \dots &0 \\
   -2\mathbf{H}/\alpha_H & \mathbf{I} & \ddots & \ddots & \ddots & \vdots \\
   \mathbf{I} & -2\mathbf{H}/\alpha_H & \mathbf{I} & \ddots & \ddots & \vdots \\
   0 & \mathbf{I} & -2\mathbf{H}/\alpha_H & \mathbf{I} & \ddots & \vdots \\
   \vdots & \ddots & \ddots & \ddots & \ddots & \vdots \\
   0 & \dots & 0 & \mathbf{I} & -2\mathbf{H}/\alpha_H & \mathbf{I}
  \end{pmatrix}
\end{split} \label{pad}
\end{equation}
and use the following lemma.

\begin{lemma}
  \begin{equation}
    G\big{(}\frac{\mathbf{H}}{\alpha_H}\big{)} = Pad\big{(}\frac{\mathbf{H}}{\alpha_H}\big{)}^{-1} . \Big{(}\frac{\mathbf{I}_\upsilon -\mathbf{L}_\upsilon^2}{2}\otimes \mathbf{I}\Big{)}
  \end{equation}
\end{lemma}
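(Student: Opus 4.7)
The plan is to establish the equivalent identity
\[
\mathrm{Pad}\!\left(\tfrac{\mathbf H}{\alpha_H}\right)\cdot G\!\left(\tfrac{\mathbf H}{\alpha_H}\right) \;=\; \tfrac{1}{2}\bigl(\mathbf I_\upsilon - \mathbf L_\upsilon^2\bigr)\otimes\mathbf I,
\]
which yields the lemma after left-multiplication by $\mathrm{Pad}^{-1}$. The invertibility of $\mathrm{Pad}(\mathbf H/\alpha_H)$ is immediate from the block form displayed in~\eqref{pad}: the operator is block lower-triangular with $\mathbf I$ on every diagonal block, hence unipotent, and $\mathrm{Pad}^{-1}$ is given by a finite Neumann series in the strictly lower-triangular part.

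To verify the displayed identity, I would substitute the definitions, distribute the three tensor summands of $\mathrm{Pad}(\mathbf H/\alpha_H)$ over the sum defining $G$, and regroup by the power of $\mathbf L_\upsilon$. The nilpotency $\mathbf L_\upsilon^\upsilon = 0$ is central here: it silently discards the overshoot contributions from the shifted sums and lets the common exponent $k$ run over $\{0,\ldots,\upsilon-1\}$. For $k\ge 2$ the coefficient of $\mathbf L_\upsilon^k\otimes\mathbf I$ evaluates to $\tilde T_k(x) + \tilde T_{k-2}(x) - 2x\,\tilde T_{k-1}(x)$ with $x := \mathbf H/\alpha_H$, while the boundary cases $k=0,1$ contribute $\tilde T_0$ and $\tilde T_1 - 2x\tilde T_0$ respectively.

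The final step is a coefficient check driven by the Chebyshev three-term recurrence $T_k - 2xT_{k-1} + T_{k-2}=0$. For $k\ge 3$ the rescaled polynomials coincide with the standard $T_k$, so the recurrence zeroes out the coefficient. The small cases are explicit: $k=0$ yields $\tfrac{1}{2}\mathbf I_\upsilon\otimes\mathbf I$; the $k=1$ coefficient vanishes since $\tilde T_1 = x$ and $2x\tilde T_0 = x$; and the $k=2$ coefficient equals $(2x^2-1)+\tfrac{1}{2}-2x^2 = -\tfrac{1}{2}$. Summing reproduces $\tfrac{1}{2}(\mathbf I_\upsilon - \mathbf L_\upsilon^2)\otimes\mathbf I$.

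The main subtlety to watch is the boundary bookkeeping: the formal ``substitute $y\to\mathbf L_\upsilon$ into the generating function'' argument is valid only because nilpotency compensates exactly for the truncated tail of the infinite series, and the shifted summation indices must be realigned accordingly. Once that accounting is done, the identity collapses to a two-line application of the Chebyshev recurrence and a handful of explicit small-$k$ evaluations.
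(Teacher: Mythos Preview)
Your proof is correct and follows essentially the same route as the paper: both arguments boil down to the generating-function identity $\sum_\ell \tilde T_\ell(x)y^\ell = (1-y^2)/\bigl(2(1+y^2-2yx)\bigr)$ with $y\to\mathbf L_\upsilon\otimes\mathbf I$ and $x\to\mathbf I_\upsilon\otimes\mathbf H/\alpha_H$, which the paper simply invokes while you unpack it via the three-term recurrence and explicit small-$k$ checks. Your version is the more careful of the two, making explicit the nilpotency bookkeeping that the paper's one-line formal substitution leaves implicit.
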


\begin{proof}
  Inverting the right-hand side of \eqref{pad} and multiplying by $\frac{\mathbf{I}_\upsilon -\mathbf{L}_\upsilon^2}{2}\otimes \mathbf{I}$, we obtain the form of the generating function for the rescaled Chebyshev polynomials of \eqref{gen} with $x=\mathbf{I}_\upsilon \otimes \frac{\mathbf{H}}{\alpha_H}$ and $y=\mathbf{L}_\upsilon \otimes \mathbf{I}$.
\end{proof}
Since finding the inverse of a matrix, let alone on a quantum computer, is a tedious task, we will get our Chebyshev history state by solving the linear system
\begin{equation}
  Pad(\frac{\mathbf{H}}{\alpha_H}) \ket{\Phi}\otimes\ket{\psi}=\frac{\ket{0}-\ket{2}}{2}\otimes \ket{\psi} \label{linsys}
\end{equation}
for a state $\ket{\Phi}\in\mathcal{H}^{2^\upsilon}$. The right-hand side of the equation results from applying the operator $\frac{\mathbf{I}_\upsilon -\mathbf{L}_\upsilon^2}{2}\otimes \mathbf{I}$ to the $\ket{0}\otimes\ket{\psi}$ state. Many quantum linear system solvers have been proposed, but the authors propose to use the one described in~\cite{costa2021optimalscalingquantumlinear}. The task of block-encoding the gate $Pad(\mathbf{H}/\alpha_H)$ and the lower shift matrix is one that will not be discussed in this work, as it has been extensively elaborated on in~\cite{Low_2024}.

\vspace{1em}
\textbf{\underline{Algorithm}: Quantum Chebyshev Phase Estimation}
\label{CHG}

\begin{enumerate}
  \item Construct a block encoding of the target matrix $\mathbf{H}$ and the lower shift matrix $\mathbf{L}_\upsilon$.
  \item Use $\mathbf{L}_\upsilon$ and $\mathbf{H}$ to form the $Pad(\mathbf{H}/\alpha_H)$ gate according to equation~\eqref{pad}.
  \item Prepare the $\upsilon$ qubit register in the state $\frac{\ket{0} - \ket{2}}{2}$.
  \item Invoke the quantum linear system solver to solve equation~\eqref{linsys} for $\ket{\Phi}$.
  \item Perform the quantum Fourier transform and measure in the computational basis.
  \item Boost the success probability using median amplification.
\end{enumerate}
\vspace{1em}

\begin{comment}
\begin{algorithm} 
\caption{Quantum Chebyshev Phase Estimation Algorithm}\label{CHG}
\begin{algorithmic}[1]
\State Construct a block encoding of the target matrix $\mathbf{H}$ and the lower shift matrix $\mathbf{L}_\upsilon$.
\State Use $\mathbf{L}_\upsilon$ and $\mathbf{H}$ to form the $Pad(\mathbf{H}/\alpha_H)$ gate according to equation \eqref{pad}.
\State Prepare the $\upsilon$ qubit register in the $\frac{\ket{0}-\ket{2}}{2}$ state.
\State Invoke the quantum linear system solver to solve equation \eqref{linsys} for $\ket{\Phi}$.
\State Perform the quantum Fourier transform and measure in the computational basis.
\State Boost the success of probability using median amplification.
\end{algorithmic}
\end{algorithm}
\end{comment}
The general theorem for the Chebyshev phase estimation \ref{main} takes into account (i) the error from the quantum linear system solver, (ii) the distance between the prepared state $\ket{\psi}$ and the true ground state eigenstate $\ket{\psi_0}$ and (iii) the fact that we are approximating a rescaled Chebyshev history state \eqref{RCHS} to a normal one \eqref{CHS}.
Finally, for preparing an initial state that is close to the true ground state, the authors introduce an algorithm for preparing the ground state of a non-Hermitian matrix using \textit{Quantum EigenValue Transformation} (QEVT). It mainly consists of forming a more general Chebyshev state than the one in \eqref{CHS}, which requires the construction of an extended $Pad(A)$ matrix.
\end{document}